
\documentclass[sigconf, nonacm]{acmart}

\usepackage[ruled,vlined]{algorithm2e}
\usepackage{balance}
\usepackage{booktabs}
\usepackage{makecell}
\usepackage{enumitem}
\usepackage{xcolor}
\usepackage{url}
\usepackage[caption=false,font=normalsize,labelfon
t=sf,textfont=sf]{subfig}

\newtheorem{thm}{Theorem}
\newtheorem{lem}[thm]{\bf Lemma}

\newtheorem{definition}{Definition}
\newtheorem{exmp}{Example}

\newcommand\vldbdoi{XX.XX/XXX.XX}
\newcommand\vldbpages{XXX-XXX}
\newcommand\vldbvolume{14}
\newcommand\vldbissue{1}
\newcommand\vldbyear{2024}
\newcommand\vldbauthors{\authors}
\newcommand\vldbtitle{\shorttitle} 
\newcommand\vldbavailabilityurl{URL_TO_YOUR_ARTIFACTS}
\newcommand\vldbpagestyle{plain} 

\begin{document}
\title{Accelerating Maximal Clique Enumeration via Graph Reduction}

\author{Wen Deng}
\affiliation{%
  \institution{Fudan University}
  \city{Shanghai}
  \country{China}
}
\email{wdeng21@m.fudan.edu.cn}

\author{Weiguo Zheng}
\affiliation{%
  \institution{Fudan University}
  \city{Shanghai}
  \country{China}
}
\email{zhengweiguo@fudan.edu.cn}

\author{Hong Cheng}
\affiliation{%
  \institution{The Chinese University of Hong Kong}
  \city{Hong Kong}
  \country{China}
}
\email{hcheng@se.cuhk.edu.hk}

\begin{abstract}
As a fundamental task in graph data management, maximal clique enumeration (MCE) has attracted extensive attention from both academic and industrial communities due to its wide range of applications. However, MCE is very challenging as the number of maximal cliques may grow exponentially with the number of vertices. The state-of-the-art methods adopt a recursive paradigm to enumerate maximal cliques exhaustively, suffering from a large amount of redundant computation. In this paper, we propose a novel reduction-based framework for MCE, namely RMCE, that aims to reduce the search space and minimize unnecessary computations. The proposed framework RMCE incorporates three kinds of powerful reduction techniques including global reduction, dynamic reduction, and maximality check reduction. Global and dynamic reduction techniques effectively reduce the size of the input graph and dynamically construct subgraphs during the recursive subtasks, respectively. The maximality check reduction minimizes the computation for ensuring maximality by utilizing neighborhood dominance between visited vertices. Extensive experiments on 18 real graphs demonstrate the effectiveness of our proposed method. It achieves remarkable speedups up to $44.7\times$ compared to existing approaches.
\end{abstract}

\maketitle

\pagestyle{\vldbpagestyle}
\begingroup\small\noindent\raggedright\textbf{PVLDB Reference Format:}\\
\vldbauthors. \vldbtitle. PVLDB, \vldbvolume(\vldbissue): \vldbpages, \vldbyear.\\
\href{https://doi.org/\vldbdoi}{doi:\vldbdoi}
\endgroup
\begingroup
\renewcommand\thefootnote{}\footnote{\noindent
This work is licensed under the Creative Commons BY-NC-ND 4.0 International License. Visit \url{https://creativecommons.org/licenses/by-nc-nd/4.0/} to view a copy of this license. For any use beyond those covered by this license, obtain permission by emailing \href{mailto:info@vldb.org}{info@vldb.org}. Copyright is held by the owner/author(s). Publication rights licensed to the VLDB Endowment. \\
\raggedright Proceedings of the VLDB Endowment, Vol. \vldbvolume, No. \vldbissue\ %
ISSN 2150-8097. \\
\href{https://doi.org/\vldbdoi}{doi:\vldbdoi} \\
}\addtocounter{footnote}{-1}\endgroup

\ifdefempty{\vldbavailabilityurl}{}{
\vspace{.3cm}
\begingroup\small\noindent\raggedright\textbf{PVLDB Artifact Availability:}\\
The source code, data, and/or other artifacts have been made available at \url{\vldbavailabilityurl}.
\endgroup
}

\section{Introduction}\label{intro}

As one of the most important cohesive structures in graph data, clique is closely related to other fundamental problems like independent set problem~\cite{tsukiyama1977new} and graph coloring problem~\cite{dukanovic2007semidefinite}. 
In an undirected graph $G$, a clique refers to a subgraph of $G$ where every pair of vertices are adjacent. A clique is maximal when no other vertices can be included to form a larger clique. Maximal clique enumeration (MCE) is the task of listing all the maximal cliques in a graph $G$ and can be applied in a variety of fields, such as computational biology~\cite{topfer2014viral,abu2005relative,yu2006predicting,matsunaga2009clique}, social network~\cite{wen2016maximal,lu2018community}, and wireless communication networks~\cite{biswas2013maximal}. 

\subsection{Existing Methods and Limitations}

\begin{figure}[tbp]
\centerline{\includegraphics[width=0.5\textwidth]{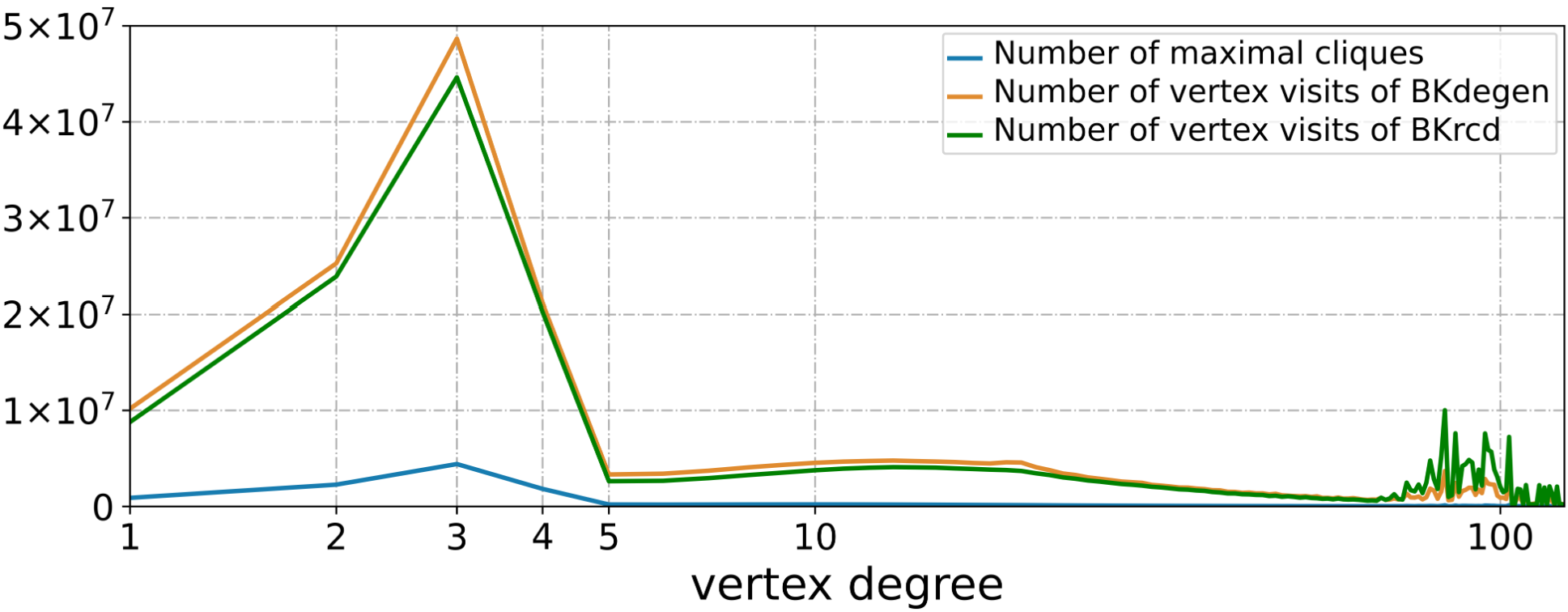}}
\vspace{-0.15in}
\caption{Illustration of the gap between the number of maximal cliques and the number of vertex visits (the horizontal axis is log-scaled).}
\label{fig_motivation}
\vspace{-0.15in}
\end{figure}

\begin{algorithm}[b]
    \caption{$BK(R,P,X$)}
    \label{alg:BKframework}
    \small
    \KwIn {Partial clique $R$, Candidate set $P$, Forbidden set $X$}
    \KwOut {All maximal cliques in $G[P]$ restricted by $X$}
    \nl \If{$P=\emptyset$ and $X=\emptyset$}{
        \nl report $R$ as a maximal clique \\
    }
    \nl \For{$v\in P$}{
        \nl $BK(R\cup \{v\}, P\cap N(v), X\cap N(v)$) \\
        \nl $P \gets P\setminus\{v\}$ \\
        \nl $X \gets X\cup \{v\}$ \\
    }
\end{algorithm}

BKdegen~\cite{eppstein2010listing} and BKrcd \cite{li2019fast} are two state-of-the-art algorithms for MCE, both of which adopt the Bron-Kerbosch (BK) 
framework~\cite{bk1973} in Algorithm \ref{alg:BKframework} that recursively enumerates all maximal cliques. The framework involves three sets, i.e., $R, P$, and $X$, where $R$ stores the partial clique, $P$ records the candidate set, and $X$ contains vertices that have already been visited to ensure the maximality (also called a forbidden set). 
The recursive function {$BK$} initializes $R,P$, and $X$ as $\emptyset, V,$ and $ \emptyset$, respectively. To expand a new branch, a vertex $v$ is moved from $P$ to $R$, Then $P$ and $X$ are updated as $P \cap N(v)$ and $X \cap N(v)$, respectively (lines 3-4). After completing this search branch, vertex $v$ is moved from $P$ to $X$ (lines 5-6). Once both $P$ and $X$ are empty, $R$ is reported as a maximal clique (lines 1-2).
BKdegen \cite{eppstein2010listing} combines the degeneracy order and pivot selection, effectively bounding the subproblem scale of each vertex by the degeneracy $\lambda$ of graph $G$. 
BKrcd \cite{li2019fast} leverages the dense nature of subproblems to enumerate maximal cliques in a top-down manner. 
A  question naturally arises ``can we make the task of maximal clique enumeration even faster?''

In practice, a substantial amount of overhead is incurred due to the need for repeated visits to specific vertices within the graph during the process of maximal clique enumeration. 
Figure \ref{fig_motivation} presents the distribution of the number of maximal cliques in which each vertex appears and the number of visits to each vertex by different algorithms. 
The results are averaged over 7 real graphs from SNAP~\cite{snapnets}.
We observe a notable gap between the number of maximal cliques and the number of vertex visits, especially for low-degree vertices.
For instance, on average, BKdegen and BKrcd visit degree-3 vertices approximately 48.6 million and 44.6 million times, respectively. However, the average number of maximal cliques involving degree-3 vertices is only about 4 million.
This large gap 
suggests that there is significant scope for further improvement in time efficiency.

Motivated by this observation, in order to enhance the performance, the key idea is to develop effective techniques to 
\textit{bridge this gap by reducing the graph size and minimizing unnecessary vertex visits during the recursive computation process, while preserving the completeness of maximal cliques.}

\subsection{Our Approach and Contributions}


\begin{figure}[tbp]
\centerline{\includegraphics[width=0.5\textwidth]{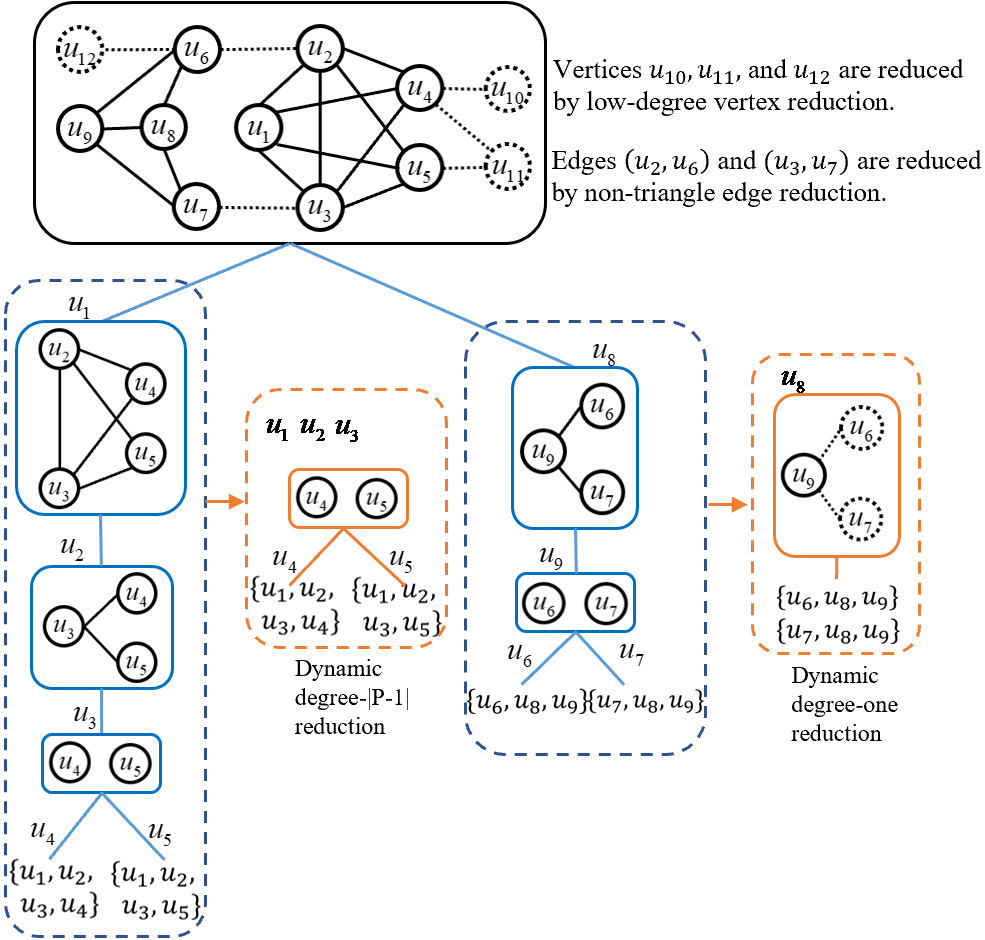}}
\vspace{-0.1in}
\caption{
The search process of a toy graph, 
where the branches in blue dashed boxes denote the original branches following the BK algorithm, while those in orange dashed boxes denote the branches by applying dynamic reduction. 
}
\label{fig_mov_search_tree}
\vspace{-0.2in}
\end{figure}

\begin{exmp}
\textit{
Let us consider the graph 
at the top of Figure~\ref{fig_mov_search_tree},
where vertex  $u_{10}$ participates in just a single maximal 2-clique with its only one neighbor $u_4$.
However, following the BK framework in Algorithm~\ref{alg:BKframework}, there are at most 5 (i.e., the number of neighbors of $u_4$) subproblems involved in the recursive computation, where the candidate set $P$ and the forbidden set $X$ may be intersected with the neighborhood of $u_4$. This will lead to unnecessary visits of vertex $u_{10}$.
%
%
By removing vertex $u_{10}$ and its corresponding edge, we can prevent their duplicate visits during the recursion, 
significantly reducing the computation cost. 
Meanwhile, we can ensure the completeness of solutions by reporting the maximal cliques that include the removed vertex $u_{10}$ in advance.
}
\end{exmp}

 Inspired by the example above, we propose a novel \underline{r}eduction-based framework for \underline{m}aximal \underline{c}lique \underline{e}numeration, namely RMCE, that incorporates three kinds of powerful reduction techniques including global reduction, dynamic reduction, and maximality check reduction.

 \noindent\textit{(1) Global Reduction}.
 Given a graph $G$, RMCE removes the low-degree vertices (whose degree is not larger than 2), because we can identify and maintain the maximal cliques involving these deleted vertices beforehand. 
 Moreover, we remove edges that do not form triangles with other edges throughout the graph, as they directly constitute maximal 2-cliques. 
 For example, in 
 Figure~\ref{fig_mov_search_tree}, 
 the edges $(u_2,u_6)$ and $(u_3,u_7)$ can be removed as they are not contained in any other cliques, forming maximal cliques themself. 
 Removing these vertices and edges can substantially reduce redundant computations during 
 the recursive procedure. 

\noindent\textit{(2) Dynamic Reduction}. Enumerating maximum cliques operates in a recursive manner, creating an extensive number of subtasks (also named subproblems). In each subtask, a subgraph will be dynamically constructed such that new degree-zero and degree-one vertices may appear. RMCE recursively reduces the subgraph size by removing these vertices.
Moreover, the subgraph is usually very dense and may contain the vertices that are adjacent to all other vertices in the candidate set $P$.
Since every maximal clique in this subtask must contain these vertices, we can move these vertices from the candidate set $P$ into the partial clique $R$ directly, and thus reduce the number of recursive calls (we call it dynamic degree-$\lvert P-1\rvert$ reduction).
 In the BK algorithm with pivot selection, each recursive call takes $O((|P|+|X|)^2)$ to choose a pivot. If $k$ recursive calls are eliminated in a subproblem, the time cost will be reduced to $O((|P|+|X|)^2)$ from $O((k+1)(|P|+|X|)^2)$. For example, in the second subgraph of the subproblem created by expanding $u_8$ in the search tree in Figure~\ref{fig_mov_search_tree}, $u_6$ and $u_7$ become new low-degree vertices that can be reduced. Removing $u_6$ and $u_7$ will result in pruning the branches expanded by $u_6$ and $u_7$. In the left branch, $u_1,u_2$, and $u_3$ are adjacent to all other vertices in the subgraph, we can move them to the partial clique $R$ together, thus decreasing the number of recursive calls from 3 to 1.

\noindent\textit{(3) Maximality Check Reduction}. 
The concept ``maximal clique'' involves two criteria: being a clique and achieving maximality. However, the existing methods 
have primarily focused on reducing the candidate vertices, with little attention given to optimizing the forbidden set $X$ that is used to perform maximality checks. The forbidden set size $\lvert X\rvert$ can be quite large, but not every vertex in $X$ is required in many subtasks, 
especially when the neighbors of a vertex (e.g., $u_i \in X$) are contained by the neighbors of another vertex (e.g., $u_j \in X$). In such cases, this vertex (e.g., $u_i$) can be removed safely from $X$ without reporting cliques that are not maximal. 
As studied in \cite{han2018speeding}, set intersections take 73.6\% of the running time in MCE. Since the forbidden set $X$ is frequently intersected during the recursion of MCE, reducing the size of $X$ can lead to a reduction in the cost of set intersection operations.
We develop an efficient algorithm with linear space overhead that minimizes the forbidden set $X$, significantly reducing unnecessary computations.


In summary, we make the following contributions in this paper.
\begin{itemize}

\item To the best of our knowledge, we are the first to propose a reduction-based framework, namely RMCE, for enumerating maximal cliques. 

\item We develop powerful global reduction techniques and dynamic reduction techniques that effectively reduce the size of the input graph globally and the size of subgraphs in recursive subtasks, respectively.

\item We introduce a novel concept of maximality check reduction for maximal clique enumeration and propose an efficient algorithm to minimize the forbidden set $X$.

\item The proposed graph reduction techniques above are orthogonal to the existing BK-based methods for maximal clique enumeration. 

\item Extensive experiments on 18 real networks demonstrate that our proposed algorithms achieve significant speedups compared to state-of-the-art approaches.

\end{itemize}



\section{Problem Definition and Preliminary}\label{sec2}

In this section, we first formally define the problem in Section~\ref{subsec:problem formulation} and then introduce two state-of-the-art methods in Section~\ref{subsec:existing solutions}. Table~\ref{notationtable} lists the frequently-used notations in the paper.

\subsection{Problem Formulation}\label{subsec:problem formulation}

Let $G = (V,E)$ be an undirected graph with $n=\lvert V \rvert$ vertices and $m=\lvert E\rvert$ edges. The neighbor vertices of a vertex $V$ in $G$ are denoted as $N(v)$, and the degree of $v$ is denoted as $d(v) = \lvert N(v) \rvert$. The common neighbors of vertices in $S$ are denoted as $C(S) = \cap_{v\in S}N(v)$.

\begin{definition}
\textit{
{\textbf{(Induced Subgraph)}}.
Given a subset $S$ of $V$, an induced subgraph $G[S]$ is defined as $G[S] = (S, E')$, where $S\subseteq V$ and $E' = \{(u, v) \in E \mid u, v \in S\}$.} 
\end{definition}

Let $N_S(v)$ denote the set of vertices in $S$ that are adjacent to vertex $v$ in graph $G$, and let $d_S(v)$ denote the number of such vertices, that is, $N_S(v) = N(v) \cap S$ and $d_S(v) = \lvert N_S(v) \rvert$. 

\begin{definition}
\textit{
{\textbf{($k$-Core)}}
Given a graph $G = (V, E)$, an induced subgraph $G[S]$ is a $k$-core if it satisfies that for every vertex $v \in S$, $d_{S}(v) \geq k$, and for every subset $S'\subset S$, the induced subgraph $G[S']$ is not a $k$-core.
}
\end{definition}

\begin{definition}
\textit{
{\textbf{(Core Number)}}
Given a graph $G = (V, E)$, the core number of a vertex $v$ is the maximum value $k$ of a $k$-core that contains $v$. The core number of a graph is the maximum core number among all its vertices.
}
\end{definition}

\begin{definition}
\textit{
{\textbf{(Degeneracy Order)}}
Given a graph $G = (V, E)$, the degeneracy of $G$, denoted by $\lambda$,  is equal to the core number of $G$. The order of vertices $\overrightarrow{V} = \{v_1, v_2, v_3, \ldots, v_n\}$ is called the degeneracy order of $G$ if vertex $v_i$ has the minimum degree in every induced subgraph $G[\{v_i, v_{i+1}, \ldots, v_n\}]$.
}
\end{definition}

The degeneracy order can be efficiently computed in linear time by iteratively removing the vertex with the smallest degree from the graph. In this order, each vertex $v_i$ has at most $\lambda$ neighbors in the induced graph $G[\{v_i, v_{i+1}, \ldots, v_n\}]$ among its later neighbors. We use $N^-(v)$ and $N^+(v)$ to denote the earlier neighbors and later neighbors of vertex $v$, respectively.

\begin{table}[!t]
\renewcommand{\arraystretch}{1.25}
\caption{Notations}
\label{notationtable}
\centering
\resizebox{\linewidth}{!}{
    \begin{tabular}{ll}
    \toprule
    Notations &  {Descriptions}\\
    \midrule
      $G=(V,E)$ & Undirected graph $G$ with vertex set $V$ and edge set $E$ \\
      $N(v)$ & Neighbors of vertex $v$  \\
      $C(S)$ & Common neighbors of vertices in $S$, i.e., $\cap_{v\in S}N(v)$ \\
      $d(v)$ & Degree of vertex $v$, i.e., $d(v)=|N(v)|$ \\
      $G[S]$ & A subgraph of $G$ induced by $S \subseteq V$ \\
      $d_{max}$ & Maximum degree of a graph \\
      $N_S(v)$ & Neighbors of $v$ in set $S$, i.e., $N_S(v)= N(v) \cap S$  \\
      $d_S(v)$ & Number of neighbors of $v$ in set $S$, i.e., $d_S(v)=|N_S(v)|$ \\
      $\lambda$ & Degeneracy of a graph \\
      $N^+(v)$ & Neighbors of $v$ whose order is larger than $v$ \\
      $N^-(v)$ & Neighbors of $v$ whose order is smaller than $v$ \\
      $R$ & Partial clique \\
      $P$ & Candidate set \\
      $X$ & Forbidden Set \\
    $(R,P,X)$ & Subproblem with $R$, $P$, and $X$ \\
      $mc(G)$ & Maximal cliques in graph $G$ \\
      $\tilde{mc}(R,P,X)$ & Maximal cliques in subproblem $(R,P,X)$ \\
      $\alpha(\Delta V,\Delta E)$ & Maximal cliques containing vertices in $\Delta V$ or edges in $\Delta E$ \\
      $\tilde{\alpha}(R,\Delta P,X)$ & \makecell[l]{Maximal cliques containing vertices in $\Delta P$ in subproblem \\ $(R,\Delta P,X)$} \\
    \bottomrule
    \end{tabular}
}
\end{table}

\begin{definition}
\textit{
{\textbf{(Clique)}}.
An induced subgraph $G[S]$ of $G=(V,E)$ is called a clique if 
there is an edge $(u,v)\in E$ for any two vertices in $S$. We denote a clique with $k$ vertices as a $k$-clique.
}
\end{definition}

\begin{definition}
\textit{{\textbf{(Maximal Clique)}}.
A clique $G[S]$ is a maximal clique in graph $G$ if and only if $\forall v \in V\setminus S,$ the subgraph induced by $S\cup \{v\}$ is not a clique.
}
\end{definition}

\begin{exmp}
    \textit{
    Let us consider the graph shown in Figure \ref{fig_mov_search_tree}. The vertices $u_1, u_2$, and $u_3$ induce a 3-clique, but it is not maximal since we can add vertex $u_4$ or $u_5$ to form a larger 4-clique with the vertices $\{u_1, u_2, u_3,u_4\}$ or 
 $\{u_1, u_2, u_3,u_5\}$. The two cliques are maximal because there are no other vertices in the graph that can be included to form a larger clique with these vertices.
    }
\end{exmp}

\noindent \textbf{Problem Statement.  \textit{(Maximal Clique Enumeration)}}. \textit{Given a graph $G=(V,E)$, the set of maximal cliques in $G$ is denoted by $mc(G)$. The task of maximal clique enumeration (shorted as MCE) is to report all the maximal cliques $mc(G)$. 
}

\subsection{Existing Solutions}\label{subsec:existing solutions}

In the next, we 
briefly review the state-of-the-art methods for the MCE problem.

\noindent\textbf{BKdegen} \cite{eppstein2010listing}: Eppstein et al. introduce the degeneracy ordering into MCE before calling the recursive function BKpivot in \cite{tomita2006worst}.
BKpivot utilizes a pivot selection strategy that selects the vertex $u$ from $X \cup P$ that has the most neighbors in set $P$, as presented in line 4 in Algorithm \ref{alg:BKdegen}.  This choice ensures that only $u$ and its non-neighbors will be involved in the subsequent search branches. The pivot mechanism divides the search space into two parts, one containing the pivot vertex $u$ and the other without it, thereby avoiding some redundant search branches.
As shown in Algorithm \ref{alg:BKdegen}, the degeneracy order is calculated at the beginning (line 1). Then, for each vertex $v$, the forbidden set $X$ is initialized as $N^-(v)$, and the candidate set $P$ is initialized as $N^+(v)$ (line 3). This ensures that every subproblem starting from $v_i$ has a candidate set $P$ whose size is no larger than the degeneracy $\lambda$. Consequently, the worst-case time complexity is reduced to $O(3^{\frac{\lambda}{3}})$.

\noindent\textbf{BKrcd} \cite{li2019fast}: 
The subgraph induced by $N^+(v)$ may be very dense due to the nature of degeneracy ordering. 
For a dense subgraph, it only needs to delete a small number of vertices to obtain a maximal clique. As outlined in Algorithm~\ref{alg:BKrcd}, BKrcd uses a top-down search strategy to remove a vertex $v$ with the fewest neighbors in $P$ (line 4) until the remaining vertices in $P$  form a maximal clique (line 3). Then, it calls BKrcd again on the removed vertex $v$ and its neighborhood (line 5). When $P$ is already a clique and passes the maximality check, $P \cup R$ is reported as a maximal clique (lines 8-9). However, not all vertices' neighborhoods 
induce a dense subgraph
in practice.

\begin{algorithm}[t]
    \caption{BKdegen($G$)}
    \label{alg:BKdegen}
    \SetKwFunction{proc}{BKpivot}
    \KwIn {Input graph $G$}
    \KwOut {All maximal cliques in $G$.}
    \nl compute the degeneracy order of the input graph $G$ \\
    \nl \For{$v \in G$}{
        \nl \proc($\{v\}, N^+(v), N^-(v)$) \\
    }
    \SetKwProg{myproc}{Procedure}{}{}
      \myproc{\proc{$R,P,X$}}{
      \nl choose a pivot $u \gets \arg\max_{v\in X\cup P}{N(v)\cap P}$ \\
      \nl \For{$w \in (P\setminus N(u)\cap P)$}{
        \nl \proc($R\cup\{w\},P\cap N(w),X\cap N(w)$)\\
        \nl $P \gets P\setminus\{w\}$ \\
        \nl $X \gets X\cup \{w\}$ \\
      }}
\end{algorithm}

\begin{algorithm}[t]
    \caption{BKrcd($R,P,X$)}
    \label{alg:BKrcd}
    \KwIn {Partial clique $R$, Candidate set $P$, Forbidden set $X$}
    \KwOut {All maximal cliques in $G[P]$ restricted by $X$}
    \nl \If{$P=\emptyset$ and $X=\emptyset$}{
        \nl report $R$ as a maximal clique \\
    }
    \nl \While{$P$ is not a clique}{
        \nl $v \gets \mathop{\arg\min}_{v\in P} \lvert N(v)\cap P\rvert$ \\
        \nl BKrcd($R\cup \{v\}, P\cap N(v), X\cap N(v)$) \\
        \nl $P \gets P\setminus\{v\}$ \\
        \nl $X \gets X\cup \{v\}$ \\
    }
    \nl \If{$P\neq \emptyset$ and $C(P)\cap X = \emptyset$}{
        \nl report $P\cup R$ as a maximal clique \\
    }
\end{algorithm}

\section{Reduction-based Framework RMCE}\label{framework}

In this subsection, we present a novel reduction-based framework for enumerating maximal cliques. 

\noindent\textbf{Reduction-based Maximal Clique Enumeration (shorted as RMCE)}. The basic idea is to reduce the graph size and recursive search space by removing vertices and edges prior to performing the exhaustive search. 

Algorithm \ref{alg:reducedMCE} depicts the details of the proposed RMCE framework.  
Initially, we apply the global reduction on the graph $G$ (line 1)  before computing its vertex order (line 2). Subsequently, for each vertex in the ascending order of the reduced graph $G$, we employ the maximality check reduction before entering the \textit{recursive} function (lines 3-6). The $recursive$ function can be any BK-based algorithm, such as BKpivot and BKrcd. The dynamic reduction is conducted at the beginning of each \textit{recursive} function (line 7).

\noindent \textbf{Superiority of RMCE}. Our reduction algorithm offers two significant advantages. Firstly, it effectively reduces the search branches during the maximal clique enumeration process, leading to a more efficient exploration of the solution space. Secondly, enormous set intersections are involved in the recursive process. RMCE can reduce the number of neighbors for vertices, enabling faster set intersections to enhance time efficiency. 

The proposed RMCE consists of three powerful reduction techniques 
including graph reduction, dynamic reduction, and maximality check reduction.

\begin{algorithm}[t]
    \SetNoFillComment
    \caption{RMCE($G$)}
    \label{alg:reducedMCE}
    \SetKwFunction{proc}{recursive}
    \KwIn {Graph $G$}
    \KwOut {All maximal cliques in $G$}
    \nl $G \gets$  apply global reduction on $G$ \\
    \nl compute an order of the reduced graph $G$ \\
    \nl \For{$i = 1 : n$}{
        \nl $v\gets$ $i$-th vertex in degeneracy order \\
        \nl $X \gets$ apply maximality check reduction on $X$ \\
        \nl \proc($R, P, X$) \\
    }
    \SetKwProg{myproc}{Procedure}{}{}
      \myproc{\proc{$R,P,X$}}{
      \nl $R,P,X \gets$ apply dynamic reduction on $(R,P,X)$ \\
      \nl choose a pivot $u$ \\
      \nl \For{$w \in (P\setminus N(u)\cap P)$}{
        \nl \proc($R\cup\{w\},P\cap N(w),X\cap N(w)$)\\
        \nl $P \gets P\setminus\{w\}$ \\
        \nl $X \gets X\cup \{w\}$ \\
      }}
\end{algorithm}

\begin{enumerate}[leftmargin=0.5cm, itemindent=0cm]
    \item \textbf{Global Reduction}: Global reduction means deleting some vertices and edges and reporting their maximal cliques in advance without breaking the completeness of the solution. Formally, if we use $mc(G)$ denote all maximal cliques of a graph $G$, then we delete some vertices $\Delta V$ and edges $\Delta E$ such that 
    $$mc(G) = mc(G') + \alpha(\Delta V, \Delta E),$$ 
    where $G'=(V\setminus \Delta V, E\setminus \Delta E)$, where $ \alpha(\Delta V, \Delta E)$ denote the maximal cliques that contain vertices in $\Delta V$ or edges in $\Delta E$. 
    This reduction technique significantly reduces the scale of the input graph, enhancing the efficiency of subsequent computations.
    \item \textbf{Dynamic Reduction}: The RMCE framework finds the maximal cliques by using a recursive search in which subproblems will be built dynamically. 
    \begin{definition}
        \textit{\textbf{(Subproblem)}. 
    In algorithms following the BK framework, a subproblem is represented by a partial clique $R$, a candidate set $P$, and a forbidden set $X$, denoted as $(R,P,X)$. The maximal cliques in this subproblem are denoted by $\tilde{mc}(R,P,X)$.
        }
    \end{definition}
    During the recursive search, the subproblem undergoes continuous changes, providing opportunities for vertices that cannot be pruned globally to be pruned within the subproblem dynamically. This brings the need for dynamic reduction.

    Formally, we delete some vertices $\Delta P_1$ from the candidate set and move some vertices $\Delta P_2$ into $R$ such that 
    {
    $$\tilde{mc}(R,P,X) = \tilde{mc}(R',P',X') + \tilde{\alpha}(R,\Delta P_1,X),$$}
    where $R'=R\cup \Delta P_2$, $P'=P\setminus (\Delta P_1\cup \Delta P_2 )$, $X'=X\cap \Delta P_2$, and $\tilde{\alpha}(R,\Delta P_1,X)$ denotes the maximal cliques that contain $R$ and vertices in $\Delta P_1$ in the subproblem $(R,P,X)$. 
    This technique further reduces the size of the subproblem, efficiently minimizing the search space required for the subsequent search.

    \item \textbf{Maximality Check Reduction}: Maximality check reduction refers to reducing the size of forbidden set $X$ without producing any cliques that are not maximal or losing any maximal cliques, that is 
    $$\tilde{mc}(R,P,X) = \tilde{mc}(R,P,X\setminus \Delta X),$$
    where $\Delta X$ denotes the deleted vertices from the forbidden set. This reduction technique prunes unnecessary computations by identifying and eliminating vertices that can be safely ignored during the maximality check process.
    \end{enumerate}

Our framework efficiently reduces both the graph size and the recursive search space. Meanwhile, the reduction itself is expected to take as little time as possible.  To achieve this, we carefully develop reduction rules to ensure that they do not introduce excessive extra computation.
Next, we will delve into the detailed design and advantages of each of these potent
reduction techniques.

\section{Global Reduction}\label{sec4}

\subsection{Intuition}

Degeneracy order, proposed by \citet{eppstein2010listing},  
has been widely employed as the preferred vertex ordering technique in the maximal clique enumeration task. Nonetheless, we argue that degeneracy order 
suffers from two major problems:
\begin{enumerate}
    \item \textbf{Redundant computations for low-degree vertices}: When generating the degeneracy order, the iterative removal of vertices with the minimum degree is necessary. In fact, identifying maximal cliques containing low-degree vertices is straightforward due to their inherent simplicity.     
    However, these vertices and their associated edges will be redundantly traversed during subsequent recursions of the MCE algorithm. 
    This redundancy 
    is expected to be mitigated through an effective reduction technique.
    \item \textbf{Redundant computations for edges}: In addition to low-degree vertices, certain edges also undergo repetitive traversals. This is particularly evident for edges that do not form triangles with other edges in high-degree vertices. The redundant computations associated with these edges can be minimized through appropriate techniques.
\end{enumerate}

We propose two kinds of reduction techniques, namely Low-degree Vertex Reduction (Section~\ref{subsec:low-degree reduction}) and Non-triangle Edge Reduction (Section~\ref{subsec:non-triangle reduction}), to effectively tackle these challenges. Our reduction methods ensure that all deleted edges and vertices are not necessary to be considered in subsequent processes, allowing the MCE algorithm to operate on the reduced graph.

\subsection{Low-Degree Vertex Reduction}\label{subsec:low-degree reduction}

In this subsection, we introduce an efficient graph reduction technique by eliminating vertices with a degree no larger than 2. 
We present three reduction rules, each handling vertices of degree zero, one, and two, respectively. 

\begin{lem}{\textbf{(Degree-Zero Reduction)}}
\textit{
A vertex $u$ with a degree of 0 can be removed from $G$ such that $mc(G)=mc(G')$ where $G' = (V\setminus \{u\}, E)$.
}
\end{lem}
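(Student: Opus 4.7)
The plan is to argue that a degree-zero vertex contributes nothing to the enumeration of maximal cliques of size at least two and can therefore be stripped off the input without affecting the remaining search. My proof would proceed in three short steps, working directly from the definitions of clique and maximal clique given earlier in the excerpt.

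First, I would observe the key structural fact: since $d(u)=0$ in $G$, we have $N(u)=\emptyset$, so $u$ has no neighbor. By the definition of a clique, any clique of $G$ that contains $u$ together with another vertex $w$ would require the edge $(u,w)\in E$, contradicting $N(u)=\emptyset$. Hence every clique of $G$ that contains $u$ consists of $u$ alone. Under the convention (used implicitly throughout the RMCE framework for the recursion on $G'$) that maximal cliques of interest have at least one edge, $u$ appears in no maximal clique of $G$; under the alternative convention, the singleton $\{u\}$ is reported once, in advance, via the $\alpha(\Delta V,\Delta E)$ term of the global-reduction definition.

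Second, I would show the forward inclusion $mc(G)\subseteq mc(G')$ (restricted to cliques not equal to $\{u\}$). Let $C\in mc(G)$ with $C\neq\{u\}$. By the previous step, $u\notin C$, so $C\subseteq V\setminus\{u\}$. Since $G'$ is obtained from $G$ by removing only the isolated vertex $u$ and no edges, the induced subgraphs $G[C]$ and $G'[C]$ are identical; hence $C$ is still a clique in $G'$. For maximality in $G'$, suppose some $v\in (V\setminus\{u\})\setminus C$ could extend $C$; then $v$ would also extend $C$ in $G$, contradicting maximality of $C$ there.

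Third, I would prove the reverse inclusion $mc(G')\subseteq mc(G)$. For $C\in mc(G')$, the same argument gives that $C$ is a clique of $G$. The only new candidate for extending $C$ in $G$ (as opposed to $G'$) is $u$ itself, but $u$ has no neighbor in $V$, so $C\cup\{u\}$ is not a clique whenever $C\neq\emptyset$. Therefore $C$ remains maximal in $G$. Combining the two inclusions yields $mc(G)=mc(G')$.

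I do not expect any real obstacle here; the only subtle point is the bookkeeping around the singleton $\{u\}$, which is a matter of convention rather than a mathematical difficulty, and is cleanly handled by the $\alpha(\Delta V,\Delta E)$ term introduced in the global-reduction framework. The whole argument fits in a few lines and relies only on the definitions of clique, maximal clique, and induced subgraph already given.
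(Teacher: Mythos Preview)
Your proof is correct and follows essentially the same approach as the paper, which dispatches the lemma in a single line by noting that a clique contains at least two vertices. Your argument is simply a more detailed elaboration of this same observation, carefully verifying both inclusions and addressing the singleton convention.
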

\begin{proof}
    The proof is straightforward since a clique contains at least two vertices.
\end{proof}

\begin{lem}{\textbf{(Degree-One Reduction)}}
\textit{
Let $u$ be a degree-one vertex with the only neighbor $v$. The vertex $u$ and its adjacent edge can be removed such that $\lvert mc(G)\rvert=\lvert mc(G')\rvert + 1$, where $G' = (V\setminus \{u\}, E\setminus \{(u,v)\})$.
}
\label{lemma2}
\end{lem}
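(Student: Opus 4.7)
The plan is to partition $mc(G)$ into maximal cliques containing $u$ and those not, then show the first set is a singleton and the second is in bijection with $mc(G')$ via the identity map. The degree-zero proof in the paper makes clear that a clique is understood to have at least two vertices, and I would exploit this convention together with $N(u)=\{v\}$ throughout.

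First I would show there is exactly one maximal clique of $G$ containing $u$, namely $\{u,v\}$. Any clique $C\ni u$ must have every other vertex in $N(u)=\{v\}$, so $C\subseteq\{u,v\}$; combined with $|C|\ge 2$ this forces $C=\{u,v\}$. To see $\{u,v\}$ is maximal, I note that an extending vertex $w$ would have to be adjacent to $u$, but $v$ is the only such vertex. Hence the ``$+1$'' on the right-hand side.

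Next I would establish the bijection between $mc(G')$ and $\{C\in mc(G):u\notin C\}$. In one direction, if $C\in mc(G)$ avoids $u$, then $C$ is still a clique in $G'$ (no internal edges are removed), and any $w\in V\setminus\{u\}$ extending $C$ in $G'$ would also extend it in $G$, contradicting maximality. In the other direction, if $C'\in mc(G')$, then $C'$ is a clique in $G$; the only new candidate for extension in $G$ is $u$ itself, and $u$ can extend $C'$ only when $C'\subseteq N(u)=\{v\}$, which is impossible since $|C'|\ge 2$. Summing the two cases yields $|mc(G)|=|mc(G')|+1$.

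The main obstacle is the converse direction just above: one must rule out that removing $u$ accidentally creates a new maximal clique in $G'$ that was not maximal in $G$. This is exactly where the minimum-size-two convention for cliques carries the argument; without it, $\{v\}$ could become a spurious maximal clique in $G'$ precisely in the case $N(v)=\{u\}$, contaminating the clean ``$+1$'' count. Everything else is bookkeeping on $N(u)$.
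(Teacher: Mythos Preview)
Your proof is correct and follows the same line as the paper, which merely observes that $\{u,v\}$ is a 2-clique and that $N(u)\cap N(v)=\emptyset$ ensures its maximality. Your argument is in fact more complete: you explicitly verify the bijection between $mc(G')$ and the maximal cliques of $G$ avoiding $u$, and you flag the size-two convention needed to handle the edge case $N(v)=\{u\}$, all of which the paper leaves implicit.
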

\begin{proof} 
The vertex set $\{u,v\}$ forms a 2-clique by definition, and $N(v)\cap N(u)=\emptyset$ ensures its maximality.
\end{proof}

\begin{figure}[tbp]
\centerline{\includegraphics[width=0.45\textwidth]{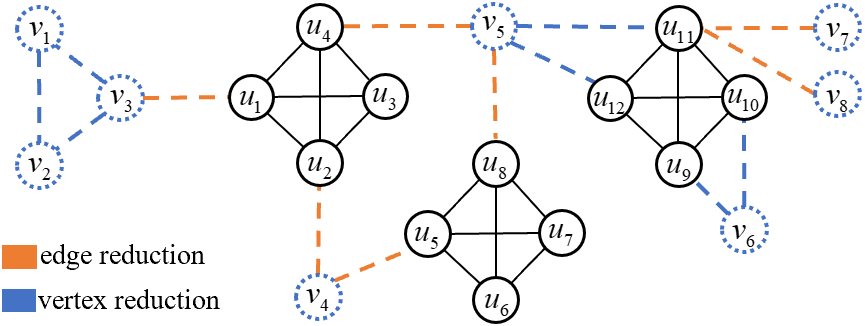}}
\vspace{-0.1in}
\caption{
Illustration of global reduction using an example graph. The vertices and edges depicted in blue dashed lines indicate those that can be deleted through our vertex reduction technique. Similarly, the edges in orange dashed lines can be removed using our edge reduction method.
}
\label{fig}
\vspace{-0.1in}
\end{figure}

\begin{lem}{\textbf{(Degree-Two Reduction)}}
\textit{For a degree-two vertex $u$ with neighbors $v$ and $w$, we have the following scenarios: 
\begin{enumerate}
    \item {If $(v,w) \notin E$}, the edges $(u,v)$ and $(u,w)$ are two maximal $2$-cliques. 
    $u$ and its edges can be deleted such that $\lvert mc(G)\rvert=\lvert mc(G')\rvert+2$, where $G' = (V\setminus \{u\}, E\setminus \{(u,v), (u,w)\})$.
    \item {If $(v,w) \in E$ and $N(v) \cap N(w) = \emptyset$}, the three edges $(u,v)$, $(u,w)$, and $(v,w)$ form a maximal $3$-clique. Vertex $u$ and the three edges can be safely deleted such that $\lvert mc(G)\rvert=\lvert mc(G')\rvert+1$, where $G' = (V\setminus \{u\}, E\setminus \{(u,v), (u,w), (v,w)\})$.
    \item {If  $(v,w) \in E$ and $N(v) \cap N(w) \neq \emptyset$}, the three edges $(u,v)$, $(u,w)$, and $(v,w)$ form a maximal $3$-clique. Vertex $u$ and two edges $(u,v)$ and $(u,w)$ can be safely deleted such that $\lvert mc(G)\rvert=\lvert mc(G')\rvert+1$, where $G' = (V\setminus \{u\}, E\setminus \{(u,v), (u,w)\})$.
\end{enumerate}
}
\end{lem}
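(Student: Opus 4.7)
The plan is to handle each of the three cases separately, exploiting the uniform observation that since $u$ has only $v$ and $w$ as neighbors, every clique containing $u$ must be a subset of $\{u, v, w\}$. In particular, the maximal cliques that contain $u$ are determined entirely by which of $\{u,v\}, \{u,w\}, \{u,v,w\}$ are cliques in $G$ and whether they can be extended by vertices outside $\{u,v,w\}$. For each case I would then verify two things: (i) the claimed collection of maximal cliques containing $u$ is correct, and (ii) after the prescribed vertex/edge deletions, the remaining maximal cliques of $G$ are in bijection with $mc(G')$, with no spurious new maximal cliques appearing.

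For Case 1, since $(v,w) \notin E$, the set $N(u) \cap N(v) = \{v,w\} \cap N(v) = \emptyset$, so $\{u,v\}$ is a maximal $2$-clique, and symmetrically for $\{u,w\}$. No $3$-clique contains $u$ because that would require $v$ and $w$ to be adjacent. Deleting $u$ together with its two incident edges only affects cliques containing $u$, so every other maximal clique of $G$ persists in $G'$ and no new maximal clique appears; this gives $|mc(G)| = |mc(G')| + 2$.

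For Cases 2 and 3, the triangle $\{u,v,w\}$ is the unique maximal clique of $G$ containing $u$: it is a clique by the edges present, and any extending vertex would have to be a common neighbor of all three, hence in $N(u) = \{v,w\}$, which is impossible. The subtle point, and the main obstacle, arises in Case 2: after removing $u$ and the two edges $(u,v), (u,w)$, the pair $\{v,w\}$ would become a new maximal $2$-clique in the resulting graph, since the only common neighbor of $v$ and $w$ in $G$ was $u$ itself. I therefore need to justify that additionally deleting the edge $(v,w)$ is safe. The key lemma is that $(v,w)$ lies in no maximal clique of $G$ other than $\{u,v,w\}$: any maximal clique using $(v,w)$ and of size $\geq 3$ would require a common neighbor of $v$ and $w$ distinct from $u$, contradicting the case hypothesis. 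Hence removing $(v,w)$ eliminates the spurious $\{v,w\}$ without destroying any genuine maximal clique of $G$, and we obtain $|mc(G)| = |mc(G')| + 1$.

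For Case 3 the same analysis applies but is cleaner: the existence of a common neighbor $x \neq u$ of $v$ and $w$ means that in $G' = (V \setminus \{u\}, E \setminus \{(u,v), (u,w)\})$ the pair $\{v,w\}$ extends to the clique $\{v,w,x\}$, so $\{v,w\}$ is not maximal in $G'$ and no spurious clique is introduced. Keeping the edge $(v,w)$ preserves every maximal clique of $G$ that contained $(v,w)$ (and did not contain $u$), since the common neighborhood of $v$ and $w$ outside $\{u\}$ is unchanged. This yields $|mc(G)| = |mc(G')| + 1$. The heart of the proof is thus the edge-removal justification in Case 2; the other two cases reduce to routine bookkeeping once the clique-structure restriction imposed by $d(u) = 2$ is in place.
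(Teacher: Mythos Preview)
Your proposal is correct and follows essentially the same line as the paper's proof: both identify the maximal cliques containing $u$ directly from $N(u)=\{v,w\}$, both argue the triangle $\{u,v,w\}$ is the unique maximal clique through $u$ in Cases~2--3, and both isolate the ``spurious $\{v,w\}$'' issue in Case~2 as the reason the edge $(v,w)$ must also be deleted there but retained in Case~3. Your write-up is in fact more careful than the paper's, which is quite terse (and contains a typo, writing $(u,v)$ where $(v,w)$ is meant); in particular, you explicitly verify the bijection between $mc(G)\setminus\{\text{cliques through }u\}$ and $mc(G')$ in each case, which the paper leaves implicit.
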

\begin{proof}

    For the first condition, $N(u)\cap N(v)=\emptyset$ means $\{u,v\}$ forms a maximal 2-clique. Since $\nexists u'\in V, N(u')\supseteq \{u,v\}$, edge $(u,v)$ can be deleted. And edge $(u,w)$ is similar to $(u,v)$.
    When $(v,w) \in E$, the edges $(u,v)$, $(u,w)$, and $(v,w)$ form a maximal 3-clique since $N(u)\cap N(v)\cap N(w) = \emptyset$. In the second scenario, $N(v) \cap N(w) = \emptyset$, ensuring that $\{v,w\}$ cannot be enlarged, which requires the deletion of $(u,v)$ to avoid reporting it as a maximal 2-clique again. Otherwise, it implies that $\exists S\subseteq V, S\cup\{v,w\}$ is also a maximal clique, indicating that the edge $(v,w)$ cannot be deleted.
\end{proof}

\begin{exmp} \vspace{-0.1in}
\textit{
Consider the graph presented in Figure \ref{fig}. 
Vertices $v_1,v_2,v_3$, and $v_6$ can be deleted by our degree-two Reduction. The vertices $v_7$ and $v_8$ can be deleted by our degree-one Reduction. 
}
\end{exmp}

Algorithm \ref{alg:VertexReduction} provides an  overview of the \textit{VertexReduction} procedure. First, a set $Q$ is initialized with all vertices whose degree is at most 2 (line 1). Then, for each vertex $v$ in $Q$, we apply either degree-two reduction, degree-one reduction, or degree-zero reduction based on its degree (lines 4-6, 10-13, 16-17). We also update $Q$ whenever a new vertex with a degree of no larger than 2 is encountered (lines 7-10, 14-15).

\setlength{\textfloatsep}{8pt}%
\begin{algorithm}[t]
    \caption{VertexReduction($G$)}
    \label{alg:VertexReduction}
    \small
    \KwIn {Graph $G$}
    \KwOut {Reduced graph $G'$}
    \nl $G'\gets G$,  $Q \gets \{v\mid d_G(v)\leq 2\}$ \\

    \nl \For{$v \in Q$}{
        \nl $Q \gets Q\setminus\{v\}$ \\
        \nl \uIf{$d_{G'}(v) = 2$}{
            \nl $u,w \gets$  the remaining two neighbors of $v$ \\
            \nl $G' \gets$ apply the degree-two reduction rule on $v$ \\

        \nl \If{$u \notin Q$ and $d_{G'}(u) \leq 2$}{
            \nl $Q \gets Q\cup\{u\}$ \\
        }
        \nl \If{$w \notin Q$ and $d_{G'}(w) \leq 2$}{
            \nl $Q \gets Q\cup\{w\}$ \\
        }
        }
        \nl \uElseIf{$d_{G'}(v) = 1$}{
            \nl $u \gets$ the only neighbor of $v$ \\
            \nl $G' \gets$ apply the degree-one reduction rule on $v$ \\
            \nl \If{$u \notin Q$ and $d_{G'}(u) \leq 2$}{
                \nl $Q\gets Q\cup\{u\}$ \\
            }

        }
        \nl \Else{
            \nl $G' \gets$ apply the degree-zero reduction rule on $v$
        }
    }
    \nl \KwRet $G'$ \\
\end{algorithm}

\noindent\textbf{Complexity Analysis}. Algorithm \ref{alg:VertexReduction} examines a total of $n$ vertices. Checking the existence of a common neighbor in degree-two reduction has a worst-case time complexity of $O(2d_{max})$ by merge-based algorithm \cite{zheng2021accelerating},  where $d_{max}$ is the maximum degree in graph $G$. Hence, the overall time complexity is $O(nd_{max})$. 
The space cost of $O(n)$ is required to maintain the vertices that need to be removed. 

Algorithm \ref{alg:VertexReduction} demonstrates superior practical performance due to several factors. Firstly, it considers vertices with a degree less than or equal to two,
reducing the number of vertices to be processed. Secondly, for degree-zero and degree-one reductions, the algorithm operates in linear time, making the reduction efficient. Lastly, 
degree-two reduction involves finding a common neighbor between two vertices $v$ and $w$, rather than computing all the common neighbors, which runs very fast in practice.

\vspace{-0.05in}
\subsection{Non-triangle Edge Reduction}\label{subsec:non-triangle reduction}
\vspace{-0.05in}

Apart from the vertex-based reduction technique, we also propose another reduction approach from the perspective of edges, namely non-triangle edge reduction.

\begin{definition}
\textit{
    \textbf{(Non-triangle Edge)}.
    Edge $(u,v)$ is defined as a non-triangle edge if $N(v)\cap N(u) = \emptyset$. 
}
\end{definition}

A non-triangle edge $(u,v)$ means no other vertex in the graph is adjacent to both $u$ and $v$. The objective is to remove all non-triangle edges from the graph, guaranteed by the following lemma.

\begin{lem}{\textbf{(Non-triangle Edge Reduction)}}
\textit{
A non-triangle edge $(u,v)$ directly forms a maximal $2$-clique and it can be deleted from $G$ such that $\lvert mc(G)\rvert = \lvert mc(G')\rvert + 1$, where $G'=(V, E\setminus \{(u,v)\})$.  
}
\end{lem}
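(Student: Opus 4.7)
The plan is to establish the identity $|mc(G)| = |mc(G')| + 1$ by constructing an explicit correspondence between the maximal cliques of $G$ and the maximal cliques of $G'$, losing exactly the clique $\{u,v\}$. First I would verify that $\{u,v\}$ itself is a maximal $2$-clique of $G$: it is a clique since $(u,v) \in E$, and it admits no extension because any vertex $w$ that could be added must lie in $N(u) \cap N(v)$, which is empty by the non-triangle assumption.

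Next I would show that $\{u,v\}$ is the \emph{only} maximal clique of $G$ that is destroyed by deleting the edge $(u,v)$. The key observation is that no clique in $G$ of size $\geq 3$ can contain both $u$ and $v$, again because such a clique would force a common neighbor, contradicting $N(u)\cap N(v)=\emptyset$. Thus every maximal clique $C$ of $G$ other than $\{u,v\}$ avoids containing both endpoints and therefore remains a clique in $G'$; its maximality transfers from $G$ to $G'$ because $E' \subseteq E$ means any extension valid in $G'$ would also be valid in $G$.

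The converse direction requires showing no spurious maximal cliques appear in $G'$. For this I would take an arbitrary maximal clique $C'$ of $G'$ (with $|C'| \geq 2$, following the paper's convention that cliques have at least two vertices) and suppose for contradiction that $C'$ is not maximal in $G$. Then some $w \notin C'$ satisfies $C' \cup \{w\}$ being a clique in $G$ but not in $G'$, which forces the missing edge $(u,v)$ to lie inside $C'\cup\{w\}$. Since $C'$ is a clique in $G'$, it cannot contain both $u$ and $v$, so without loss of generality $u \in C'$ and $w=v$; then every vertex of $C' \setminus \{u\}$ is adjacent to both $u$ and $v$ in $G$, placing it in the empty set $N(u)\cap N(v)$, so $C'=\{u\}$, contradicting $|C'|\geq 2$.

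Combining the two directions gives a bijection $mc(G') \leftrightarrow mc(G) \setminus \{\{u,v\}\}$, yielding the stated count. I do not anticipate a hard step here; the only subtlety is the converse argument handling singleton cliques, which is cleanly resolved by the paper's convention that cliques contain at least two vertices (as used in the degree-zero reduction proof).
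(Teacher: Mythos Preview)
Your argument is correct and in fact considerably more complete than the paper's own proof. The paper dispatches the lemma in a single sentence, observing only that $\{u,v\}$ cannot be extended because any added vertex would have to lie in $N(u)\cap N(v)=\emptyset$; the counting claim $|mc(G)|=|mc(G')|+1$ is left implicit. You supply exactly the bijection argument that makes this rigorous, and your handling of the converse direction (ruling out spurious maximal cliques in $G'$ via the $|C'|\geq 2$ convention) is the one genuinely nontrivial step, which the paper does not spell out. So: same underlying idea, but your write-up actually proves the full statement.
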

\begin{proof}
    It is straightforward that set $\{u,v\}$ cannot be expanded by including any other vertex, because adding any vertex into it will break the property of clique.
\end{proof}

\begin{algorithm}[t]
    \caption{EdgeReduction($G$)}
    \label{alg:EdgeReduction}
    \KwIn {Graph $G=(V,E)$}
    \KwOut {Reduced graph $G'$}
    \nl $G' \leftarrow$ G \\
    \nl \For{$(u,v) \in E$}{
        \nl \If{$(u,v)$ is not visited}{
            \nl \eIf{$u$ and $v$ has no common neighbors}{
                \nl $G' \leftarrow$ delete the edge $(u,v)$ from $G'$ \\
                \nl report $\{u,v\}$ as a maximal clique \\
            }{
                \nl $w \gets$ a common neighbor of $u$ and $v$ \\
                \nl mark edges $(u,v),(u,w)$, and $(v,w)$ visited \\
            }
        }
    }
    \nl \KwRet $G'$ \\
\end{algorithm}

\begin{figure}[tbp]
\centerline{\includegraphics[width=0.5\textwidth]{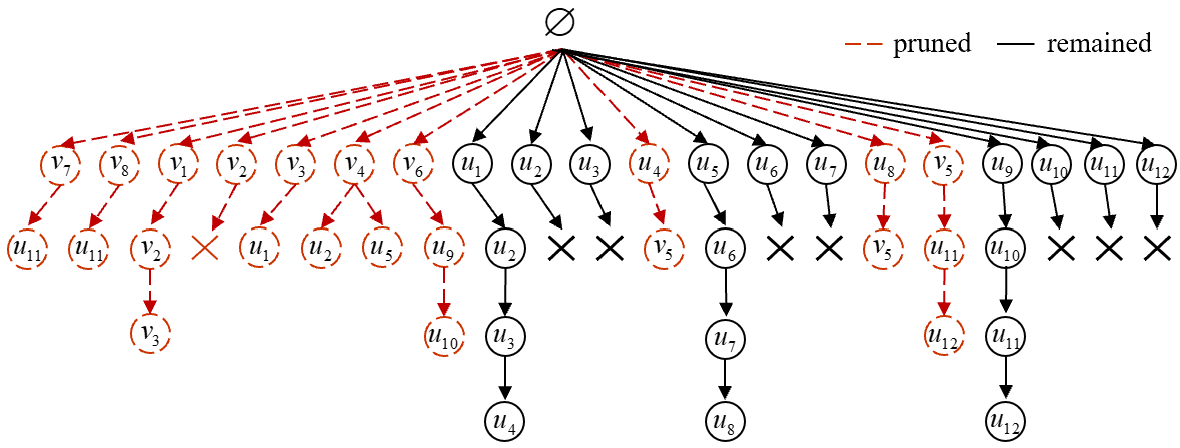}}
\vspace{-0.1in}
\caption{Search tree of BKdegen. The root $R$ is initialized to $\emptyset$.}
\label{fig2}
\vspace{-0.05in}
\end{figure}

\begin{figure*}[t]
\vspace{-0.1in}
\centering
\centerline{\includegraphics[width=1\textwidth]{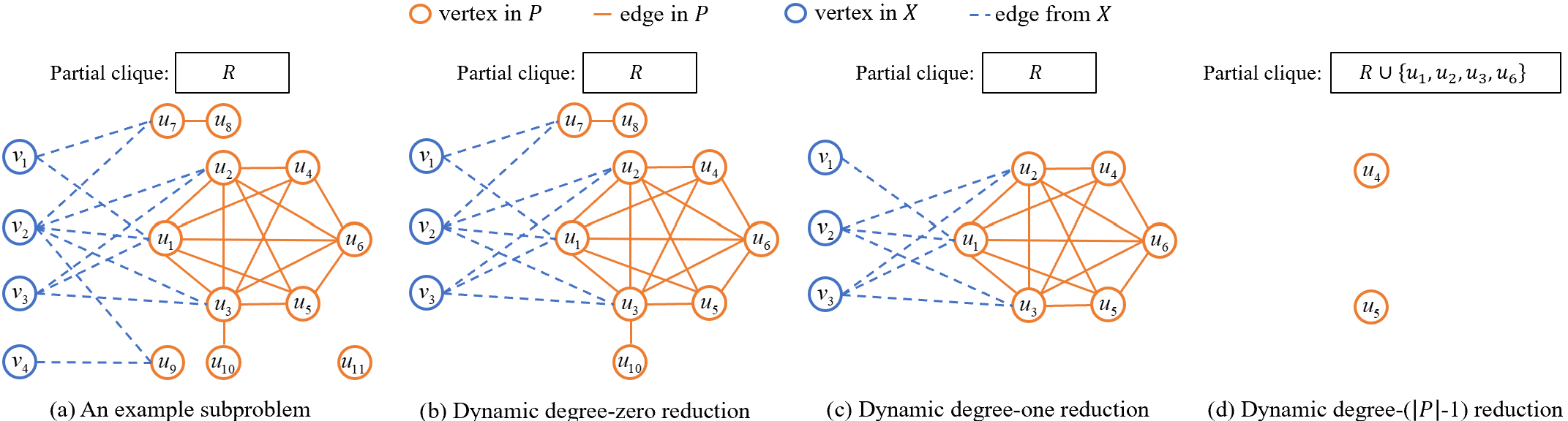}}
\vspace{-0.1in}
\caption{Illustration of dynamic reduction in the subgraph of a subproblem $(R,P,X)$.}
\label{fig3}
\vspace{-0.1in}
\end{figure*}

Algorithm \ref{alg:EdgeReduction} systematically examines each edge in the graph $G$ to determine whether it can be removed. 
Initially, for an unvisited edge $(u,v)$, if $u$ and $v$ have no common neighbors (line 4), the edge $(u,v)$ is classified as a non-triangle edge, and we can delete it while reporting $\{u,v\}$ as a maximal clique (lines 5-6). Conversely, if a vertex $w$ is their common neighbor, we mark the three edges $(u,v)$, $(u,w)$, and $(v,w)$ as visited, since they form a 3-clique (lines 7-8). Once an edge is marked as visited, it prevents the need for redundant checks on that edge in the future (line 3).

\begin{exmp} 
\textit{
Consider the graph depicted in Figure \ref{fig}. 
The edges represented by orange dashed lines (e.g., $(u_1,v_3)$ and $(v_4,u_5)$) are identified as non-triangle edges and can be safely deleted from the graph. Note that after deleting edges $(u_4,v_5)$ and $(v_5,u_8)$, vertex $v_5$ becomes a new degree-two vertex and can be removed by degree-two reduction. }

\textit{
Figure \ref{fig2} showcases a representative search tree for the graph presented in Figure \ref{fig}. By utilizing the graph reduction method, we avoid traversing the paths marked in red, resulting in a notable reduction in computational cost. Furthermore, even along the remaining black search paths, the set intersection operations 
can be accelerated due to the decreased neighborhood size for multiple vertices. 
}
\end{exmp}

\noindent\textbf{Complexity Analysis}. Algorithm \ref{alg:EdgeReduction} examines a total of $m$ edges. Moreover, finding a common neighbor takes a worst-case time complexity of $O(2d_{max})$. Consequently, the overall time complexity of \textit{EdgeReduction} is $O(md_{max})$. 
The space overhead of recording the visited edges is $O(m)$.
Notably, Algorithm \ref{alg:EdgeReduction} is considerably faster in practice due to two reasons: Firstly, once a triangle is encountered, three edges no longer require further examination. Secondly, for the majority of edges, finding a common neighbor between the two vertices only needs to identify a single vertex, resulting in a small probability of reaching worst-case time complexity.

\section{Dynamic Reduction}\label{sec5}

\subsection{Intuition}

As discussed above, global reduction directly applies to the input graph $G$. Beyond that, the recursive procedure of the MCE program will explore numerous subgraphs, presenting opportunities for further reducing the redundant computations. Let us consider the following example.

\begin{exmp}
    \textit{
   Figure \ref{fig3}(a) depicts a subgraph involved in a subproblem, where all vertices are adjacent to the current partial clique $R$. The vertices marked in blue form the forbidden set $X$, which is utilized for maximality checks, while the orange vertices represent the candidate set $P$. The corresponding recursion tree for this subproblem is displayed in Figure \ref{fig4}(a).
    By scrutinizing the recursion tree, we make the following key observations: \\
    (1) New low-degree vertices (e.g., $u_7$, $u_8$, and $u_9$) appear in this subproblem and can be effectively handled similar to the global scenario, without 
    creating additional recursion or performing set intersection operations. \\
    (2) As presented in Figure \ref{fig4}(a), there are no other branches from $u_1$ to $u_6$ in the search path $R\rightarrow u_3\rightarrow u_1\rightarrow u_2\rightarrow u_6$. Because vertices $u_1$, $u_2$, and $u_6$ connect to all other vertices in $P\cap\{u_3\}$ in the subproblem $(R\cup \{u_3\},P\cap\{u_3\},X\cap\{u_3\})$, 
    we can move $u_1$, $u_2$, and $u_6$ into partial clique $R\cup\{u_3\}$ together instead of creating three additional recursive calls. Clearly, this will further reduce the computation cost. 
    }
\end{exmp}
 
Building upon these observations, we propose a dynamic reduction technique aimed at further decreasing the size of subproblems at the recursion level.

\subsection{Dynamic Vertex Reduction}

{Dynamic reduction is designed for reducing the subgraph size of subproblem $(R,P,X)$. Different from global reduction, the maximal cliques in subgraph $G[P]$ may not be maximal in the graph $G$, because there may exist a vertex in the forbidden set $X$ that can be used to expand the maximal cliques in $G[P]$. 
To address this, we develop dynamic reduction techniques tailored specifically for degree-zero, degree-one, and degree-$(\lvert P\rvert-1)$ vertices. These techniques effectively optimize the search process while ensuring that only truly maximal cliques are reported.

For ease of presentation, a vertex $v$ is called a dynamic degree-$k$ vertex in the subproblem $(R,P,X)$ if $\lvert N_P(v)\rvert = k$.
\begin{lem}{\textbf{(Dynamic Degree-Zero Reduction)}} 
For a dynamic degree-zero vertex $u\in P$ 
in the subproblem $(R,P,X)$, we have 
\begin{enumerate} 
    \item If $N(u) \cap X = \emptyset$, $R \cup \{u\}$ is a maximal clique. Thus, we can remove $u$ from $P$, and $\lvert\tilde{mc}(R,P,X)\rvert=\lvert\tilde{mc}(R,P',X)\rvert+1$, where $P'=P\setminus \{u\}$. 
    \item If $N(u) \cap X \neq \emptyset$, $R \cup \{u\}$ is not maximal. Thus, we can remove $u$ from $P$ and $\tilde{mc}(R,P,X)=\tilde{mc}(R,P',X)$, where $P'=P\setminus \{u\}$. 
\end{enumerate}

\label{lemma4}
\end{lem}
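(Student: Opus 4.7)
The plan is to unpack what it means for a maximal clique in the subproblem $(R,P,X)$ to contain the dynamic degree-zero vertex $u$, and then handle the two cases by a direct maximality argument. First I would note that since $u\in P$, by the invariant of the BK framework $u$ is adjacent to every vertex in $R$, so $R\cup\{u\}$ is a clique. Next, because $u$ is dynamic degree-zero, $N_P(u)=\emptyset$, so $u$ has no neighbor in $P\setminus\{u\}$. This means that any clique of the form $R\cup S$ with $S\subseteq P$ and $u\in S$ must actually have $S=\{u\}$: any other vertex in $S$ would have to be a neighbor of $u$ inside $P$, contradicting $N_P(u)=\emptyset$. Thus $R\cup\{u\}$ is the \emph{only} candidate in $\tilde{mc}(R,P,X)$ that could contain $u$.

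With that setup, the two cases split on whether the candidate $R\cup\{u\}$ survives the maximality check against $X$. In Case 1, $N(u)\cap X=\emptyset$, so no vertex of $X$ (which, by the BK invariant, is already adjacent to all of $R$) is adjacent to $u$. Combined with $N_P(u)=\emptyset$ from above, no vertex outside $R\cup\{u\}$ is adjacent to all of $R\cup\{u\}$, so $R\cup\{u\}$ is maximal. Since $u$ appears in no other maximal clique of the subproblem and removing $u$ from $P$ cannot change the maximal cliques of $G[P\setminus\{u\}]$ restricted by $X$ (as $u$ was isolated in $G[P]$), we get the disjoint union $\tilde{mc}(R,P,X)=\tilde{mc}(R,P',X)\sqcup\{R\cup\{u\}\}$, giving the $+1$ in cardinality.

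In Case 2, pick any $x\in N(u)\cap X$. Since $x\in X$, $x$ is adjacent to every vertex of $R$, and by choice $x$ is adjacent to $u$, so $R\cup\{u\}\cup\{x\}$ is a clique. Hence $R\cup\{u\}$ is not maximal and is not reported. Combined with the earlier observation that $R\cup\{u\}$ is the only clique in $\tilde{mc}(R,P,X)$ that contains $u$, no maximal clique of the subproblem contains $u$, and therefore deleting $u$ from $P$ loses nothing: $\tilde{mc}(R,P,X)=\tilde{mc}(R,P',X)$.

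The argument is essentially bookkeeping once the key observation ($N_P(u)=\emptyset$ forces $S=\{u\}$) is in hand, so I do not expect a genuine obstacle. The most delicate point is articulating cleanly that removing $u$ from $P$ does not accidentally alter other maximal cliques, i.e., that $\tilde{mc}(R,P\setminus\{u\},X)\subseteq \tilde{mc}(R,P,X)$ and conversely that every maximal clique in $\tilde{mc}(R,P,X)$ not containing $u$ stays maximal in $\tilde{mc}(R,P\setminus\{u\},X)$; both directions follow immediately from $u$ being isolated in $G[P]$ and the definition of the forbidden-set-based maximality check, but it is worth stating explicitly so the equality (not merely the cardinality identity) is justified in Case 2.
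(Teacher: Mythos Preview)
Your proposal is correct and follows essentially the same approach as the paper: both arguments use $u\in P$ to get the clique property, $N_P(u)=\emptyset$ together with $N(u)\cap X=\emptyset$ to establish maximality in Case~1, and the existence of a neighbor in $X$ to witness non-maximality in Case~2. Your version is in fact more careful than the paper's terse proof, since you explicitly justify that $R\cup\{u\}$ is the \emph{only} candidate containing $u$ and that removing the isolated vertex $u$ from $P$ does not disturb the remaining maximal cliques---details the paper leaves implicit.
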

\begin{proof}
   For the dynamic degree-zero vertex $u$, 
    we have $u \in P$, which guarantees the clique property. When $N(u) \cap X = \emptyset$, it indicates that $R\cup \{u\}$ is maximal since $N_P(u)=\emptyset$, which proves the first condition. Otherwise, adding $u$ into $R$ will produce a non-empty forbidden set $X$, making $R\cup\{u\}$ not maximal. Thus, $u$ can be removed from 
    $P$.
\end{proof}

\begin{lem}{\textbf{(Dynamic Degree-One Reduction)}} 
Consider a dynamic degree-one vertex $u\in P$ with its only neighbor $v\in P$ in the subproblem $(R,P,X)$. There are two scenarios: 
\begin{enumerate} 
    \item If $\exists w \in X$ such that $u, v \in N(w)$, vertex $u$ can be immediately removed such that $\tilde{mc}(R,P,X)=\tilde{mc}(R,P',X)$ where $P'=P\setminus \{u\}$. If $v$ is also a dynamic degree-one vertex before $u$'s removal, then $v$ should also be removed, that is $\tilde{mc}(R,P,X)=\tilde{mc}(R,P\setminus\{u,v\}, X)$. 
    \item If $\nexists w \in X$ such that $u, v \in N(w)$, $R\cup\{u,v\}$ is a maximal clique and vertex $u$ is deleted such that $\lvert\tilde{mc}(R,P,X)\rvert=\lvert\tilde{mc}(R,P',X)\rvert + 1$ where $P'=P\setminus \{u\}$. If $v$ is also a dynamic degree-one vertex before $u$'s removal, then $v$ should be removed as well, that is $\lvert\tilde{mc}(R,P,X)\rvert=\lvert\tilde{mc}(R,P\setminus \{u,v\},X)\rvert + 1$. 
\end{enumerate}
\label{lemma5}
\end{lem}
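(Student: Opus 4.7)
The plan is to exploit the BK invariant $P \cup X = C(R)$, which translates maximality in $G$ into a condition involving only the ambient sets of the subproblem. Since $u$'s unique neighbor in $P$ is $v$, any subset $S \subseteq P$ that is a clique together with $R$ and contains $u$ must satisfy $S \subseteq \{u, v\}$. Consequently, the only candidate maximal cliques of $\tilde{mc}(R, P, X)$ containing $u$ are $R \cup \{u\}$ and $R \cup \{u, v\}$. The first is immediately ruled out because $v \in P \subseteq C(R)$ and $v \in N(u)$ together make $v$ an extender of $R \cup \{u\}$ in $G$.

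It then suffices to decide whether $R \cup \{u, v\}$ is maximal in $G$. Using the invariant, $C(R \cup \{u, v\}) = (P \cup X) \cap N(u) \cap N(v)$; the $P$-part collapses since $P \cap N(u) = \{v\}$ and $v \notin N(v)$, so maximality reduces to $X \cap N(u) \cap N(v) = \emptyset$, exactly the witness condition of the lemma. This dichotomy yields Case~1 (a witness $w$ exists, no maximal clique of the subproblem contains $u$, and pruning $u$ from $P$ preserves $\tilde{mc}(R, P, X)$ as a set) and Case~2 (no witness, so $R \cup \{u, v\}$ is the unique maximal clique containing $u$; reporting it and pruning $u$ reduces the cardinality by exactly one).

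For the supplementary claim about removing $v$, if $v$ is also dynamic degree-one then its only neighbor in $P$ must be $u$, so $\{u, v\}$ forms an isolated edge in $G[P]$, and the symmetric argument shows that the only candidate maximal clique containing $v$ is again $R \cup \{u, v\}$. After deleting $u$, vertex $v$ becomes dynamic degree-zero, and Lemma~\ref{lemma4} completes the bookkeeping: in Case~1 the witness $w$ lies in $N(v) \cap X$, invoking case~(2) of Lemma~\ref{lemma4} to discard $v$ without altering $\tilde{mc}$; in Case~2 the already-reported clique $R \cup \{u, v\}$ strictly contains $R \cup \{v\}$, so $R \cup \{v\}$ is not maximal in $G$ and $v$ must be dropped to prevent a subsequent application of Lemma~\ref{lemma4} case~(1) from spuriously emitting it.

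The main obstacle will be precisely this last sub-case: once $u$ is removed, the true extender of $R \cup \{v\}$ is no longer visible to the recursion, so the dynamic degree-zero rule would mistakenly classify $R \cup \{v\}$ as maximal whenever $X \cap N(v) = \emptyset$. Coupling $u$'s deletion with the immediate deletion of $v$ fixes this, and verifying that collapsing the isolated edge $\{u, v\}$ out of $P$ accounts for exactly one maximal clique $R \cup \{u, v\}$ in Case~2 (and for none in Case~1), without losing any maximal clique that mixes $v$ with vertices other than $u$ (impossible by the isolated-edge structure), is the crux of the argument.
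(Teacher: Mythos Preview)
Your proof is correct and follows essentially the same line as the paper's: both reduce the question to whether $R\cup\{u,v\}$ is maximal by checking for a common neighbor of $u$ and $v$ in $X$, and both justify dropping $v$ afterward by observing that $u$ extends $R\cup\{v\}$ so $v$ cannot yield a maximal clique once $u$ is gone. Your write-up is considerably more explicit (invoking the BK invariant $P\cup X=C(R)$ and carefully tracking the interaction with Lemma~\ref{lemma4}), whereas the paper's argument is compressed into three sentences, but the underlying ideas coincide.
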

\begin{proof}
    If $\exists w \in X$ such that $u, v \in N(w)$, moving $\{u,v\}$ into $R$ would result in an empty $P$ but non-empty $X$, as $w \in X$. This indicates a non-maximal clique. Otherwise, $X$ will be empty after the intersection, signifying the discovery of a maximal clique. $v$ will become a dynamic degree-zero vertex if it is a dynamic-one vertex before. It needs to be removed because $N(u)\supseteq R\cup\{v\}$ indicates that $v$ cannot form a maximal clique in the subproblem $(R,P\setminus\{u\},X)$.
\end{proof}
}

In Lemma \ref{lemma5}, for each dynamic degree-one vertex $u \in P$ in the subproblem, we need to determine whether $u$ and $v$ share a common vertex in the forbidden set $X$. Thus, the overall time cost is $O((\lvert X\rvert + d_{max} )\lvert P\rvert)$.  Since the reduction is expected to introduce as little extra computation cost as possible, we develop a relaxed version of dynamic degree-one reduction as follows.

\begin{figure}[t]
    \centering
    \subfloat[Original Recursion Tree]{\includegraphics[width=0.22\textwidth]{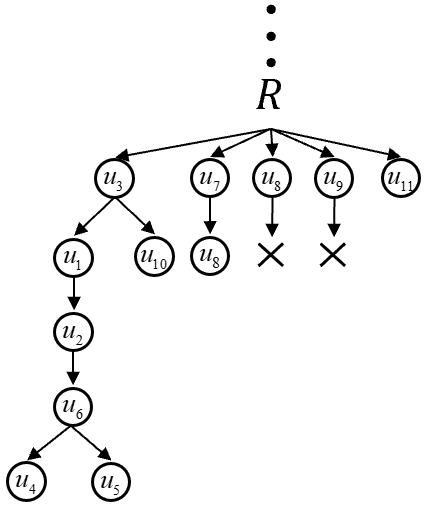}
    \label{sub_figure2.1}}
    \hfil
    \subfloat[Our Recursion Tree]{\includegraphics[width=0.22\textwidth]{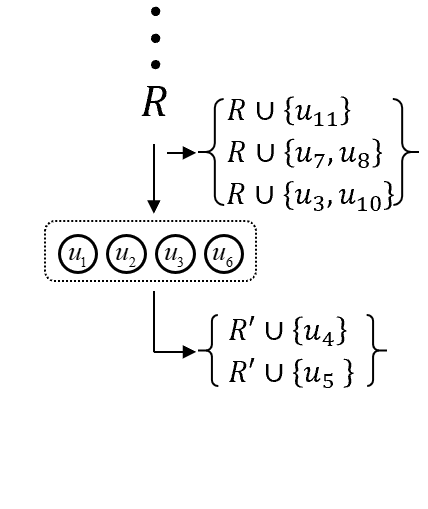}
    \label{sub_figure2.2}} 
    \vspace{-0.15in}
    \caption{Comparison of two recursion trees}
    \label{fig4}
    \vspace{-0.05in}
\end{figure}

\begin{lem}{\textbf{(Relaxed Dynamic Degree-One Reduction)}}
\textit{
Consider a dynamic degree-one vertex $u\in P$ with its only neighbor $v\in P$ in the subproblem $(R,P,X)$. 
If $N(u) \cap X = \emptyset$ or $N(v) \cap X = \emptyset$, $R \cup \{v, u\}$ forms a maximal clique and vertex $u$ can be removed from $P$ such that $\lvert\tilde{mc}(R,P,X)\rvert=\lvert\tilde{mc}(R,P',X)\rvert + 1$ where $P'=P\setminus \{u\}$. If $v$ is also a dynamic degree-one vertex before $u$'s removal, $v$ will be removed as well, that is $\lvert\tilde{mc}(R,P,X)\rvert=\lvert\tilde{mc}(R,P\setminus \{u,v\},X)\rvert + 1$. \label{lemma6}}
\end{lem}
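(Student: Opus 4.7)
The plan is to mirror the structure of Lemma~\ref{lemma5} while exploiting the stronger hypothesis to avoid any per-pair common-neighbor search in $X$. I will establish three claims in sequence: (i) $R\cup\{u,v\}$ is a maximal clique relative to the subproblem $(R,P,X)$; (ii) removing $u$ from $P$ discards exactly this one clique from $\tilde{mc}(R,P,X)$; and (iii) if $v$ was also dynamic degree-one before $u$'s deletion, removing $v$ as well preserves the count.

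For (i), I would first observe that $R\cup\{u\}$ and $R\cup\{v\}$ are cliques by the defining property of $P$, and that $u,v$ are adjacent since $v$ is the unique $P$-neighbor of $u$; hence $R\cup\{u,v\}$ is a clique. Any vertex that could extend it must lie in $(P\cup X)\cap N(u)\cap N(v)$. Since $u$ is dynamic degree-one with $N_P(u)=\{v\}$, the $P$ portion is empty; the hypothesis $N(u)\cap X=\emptyset$ or $N(v)\cap X=\emptyset$ immediately kills the $X$ portion. This is the only place where the ``relaxed'' condition matters, and it replaces the per-pair common-neighbor search used for Lemma~\ref{lemma5} with a single emptiness test on one precomputed intersection.

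For (ii), any maximal clique of $\tilde{mc}(R,P,X)$ containing $u$ must also contain $v$ (the unique $P$-neighbor of $u$) and cannot contain any further $P$ vertex, so $R\cup\{u,v\}$ is the unique such clique; removing $u$ therefore deletes exactly it, giving $|\tilde{mc}(R,P,X)|=|\tilde{mc}(R,P\setminus\{u\},X)|+1$. For (iii), when $v$ is itself dynamic degree-one before $u$'s removal, its only $P$-neighbor must be $u$ by symmetry of adjacency, so after deleting $u$ the vertex $v$ has no $P$-neighbors. The only candidate maximal clique containing $v$ in $\tilde{mc}(R,P\setminus\{u\},X)$ would then be $R\cup\{v\}$; but this is not maximal in the original subproblem because $u$ extends it, so it was never counted on the left-hand side. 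Deleting $v$ thus prevents a spurious report without losing any genuine maximal clique. Conversely, every maximal clique of $\tilde{mc}(R,P\setminus\{u,v\},X)$ avoids both $u$ and $v$, and neither can extend it back in $(R,P,X)$, so it remains maximal there; this yields the stated bijection between $\tilde{mc}(R,P,X)\setminus\{R\cup\{u,v\}\}$ and $\tilde{mc}(R,P\setminus\{u,v\},X)$.

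The main obstacle I anticipate is the bookkeeping for (iii): I must ensure that $R\cup\{u,v\}$ is accounted for exactly once and that removing $v$ does not silently eliminate some other maximal clique. The clean observation that enables this is that once $v$ is dynamic degree-one, its entire $P$-neighborhood reduces to $\{u\}$, so every clique of $\tilde{mc}(R,P,X)$ containing $v$ is either $R\cup\{v\}$ (not maximal, hence absent from the multiset) or $R\cup\{u,v\}$ (the unique clique we already report), with no third possibility.
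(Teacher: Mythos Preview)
Your proof is correct and rests on the same key observation as the paper's, namely that $N(u)\cap X=\emptyset$ or $N(v)\cap X=\emptyset$ forces $N(u)\cap N(v)\cap X=\emptyset$, so no vertex of $X$ can extend $R\cup\{u,v\}$. The paper's proof stops right there: it notes that this condition is exactly the hypothesis of case~2 of Lemma~\ref{lemma5} (``$\nexists w\in X$ such that $u,v\in N(w)$'') and simply invokes that lemma for the remaining counting claims. You instead reprove the content of Lemma~\ref{lemma5} from scratch in your parts (ii) and (iii). This buys you a self-contained argument with the bookkeeping spelled out more carefully than the paper does anywhere, at the cost of some length; the paper's route is a two-line reduction but leans entirely on the earlier lemma.
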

\begin{proof}
    If $N(v) \cap X = \emptyset$ or $N(u) \cap X = \emptyset$, it implies that $\nexists w \in X$ such that $v, u \in N(w)$. Therefore, we can deduce the 
    conclusion 
    using Lemma \ref{lemma5}.
\end{proof}

\begin{algorithm}[t]
    \SetNoFillComment
    \caption{dynamicVertexReduction($R,P,X$)}
    \label{alg:dynamicVertexReduction}
    \KwIn {Partial clique $R$, Candidate set $P$, Forbidden set $X$}
    \KwOut {Reduced $R', P', X'$}
    \nl $R', P', X' \leftarrow R, P, X$ \\
    \nl \For{$v \in X$}{
        \nl \For{$u \in N_P(v)$}{
            \nl mark $u$  \\
        }
    }
    \nl \For{$v \in P$}{
        \nl \If{$d_P(v) = 0$}{
            \nl $P' \gets$ apply the dynamic degree-zero reduction rule on $v$ \\
        }
        \nl \ElseIf{$d_P(v) = 1$}{
            \nl $u \gets$ $v$'s only neighbor \\
            \nl \If{$v$ is not marked or $u$ is not marked}{
                \nl $P' \gets$ apply the relaxed dynamic degree-one reduction rule on $v$ \\
            }
        }
    }
    \nl \For{$v \in P'$}{
        \nl \If{$d_{P'}(v)=\lvert P' \rvert -1$}{
            \nl $P', R' \gets$ apply the dynamic degree-$(\lvert P\rvert-1)$ reduction rule on $v$ \\
        }
    }
    \tcc{Update the forbidden set $X$}
    \nl $X' \gets X\cap N(R')$ \\

  \nl \KwRet $R',P',X'$\\
\end{algorithm}

Note that the cost of conducting this reduction rule is trivial since we can efficiently maintain $N(v) \cap X$ by traversing the neighbors of each vertex in the forbidden set $X$ just once. Subsequently, applying Lemma \ref{lemma6} on dynamic degree-one vertices only requires traversing the candidate set $P$. The overall time complexity for this process is $O(\lvert X\rvert\lvert P\rvert)$.

\begin{exmp}
  Let us consider the subproblem presented in Figure~\ref{fig3}(a).
The vertices in the candidate set $P$ are in orange color and the forbidden set $X$ consists of vertices in blue color. Through the reduction rules, we can safely 
remove the dynamic degree-zero vertices $u_9$ and $u_{11}$, resulting in the new subgraph of the subproblem shown in Figure \ref{fig3}(b). Additionally, removing the degree-one vertices $u_7$, $u_8$, and $u_{10}$ leads to the subgraph of this subproblem in Figure \ref{fig3}(c). 
\end{exmp}

\begin{lem}{\textbf{(Dynamic Degree-$(\lvert P\rvert-1)$ Reduction)}}
\textit{If a vertex $u\in P$ satisfies $\lvert N_P(u)\rvert = \lvert P\rvert -1$ in the subproblem $(R,P,X)$, $u$ is called a dynamic degree-$(\lvert P\rvert-1)$ vertex. In this case, we can directly move $u$ from $P$ into $R$, ensuring that $\tilde{mc}(R,P,X) = \tilde{mc}(R',P',X')$, where $R'=R\cup\{u\}$, $P'=P\setminus \{u\}$, and $X'=X\cap N(u)$.
\label{lemma-n-1}}
\end{lem}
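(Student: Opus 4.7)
The plan is to establish the lemma by first invoking the standard BK invariant on the subproblem $(R,P,X)$, namely that every vertex in $P$ is adjacent to every vertex in $R$ (this is maintained by line 4 of Algorithm~\ref{alg:BKframework}, where $P$ is intersected with $N(v)$ whenever $v$ is added to $R$). Combined with the hypothesis that $u \in P$ satisfies $N_P(u) = P \setminus \{u\}$, this yields the crucial observation that $u$ is adjacent to every vertex in $R \cup (P \setminus \{u\})$.

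Next, I would show the key inclusion: every maximal clique $C \in \tilde{mc}(R,P,X)$ contains $u$. Since such a clique satisfies $R \subseteq C \subseteq R \cup P$, every vertex of $C$ lies in $R \cup (P \setminus \{u\})$ and is therefore adjacent to $u$; hence $C \cup \{u\}$ is also a clique inside the subproblem. By maximality of $C$ we must have $u \in C$. This immediately gives the inclusion $\tilde{mc}(R,P,X) \subseteq \tilde{mc}(R',P',X')$, because dropping $u$ from $P$ and adding it to $R$ only restricts attention to cliques that already contained $u$, and the restriction $X' = X \cap N(u)$ is exactly the set of forbidden vertices that remain relevant once $u$ is committed to the clique (any $w \in X \setminus N(u)$ could never extend a clique containing $u$, so it is safe to drop).

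For the reverse inclusion $\tilde{mc}(R',P',X') \subseteq \tilde{mc}(R,P,X)$, I would argue directly: given a maximal clique $C'$ of the new subproblem, $C'$ contains $u$ and extends $R$, so it is a clique contained in $R \cup P$; maximality in $(R,P,X)$ requires that no vertex of $P \setminus C'$ nor of $X$ extends it. The first condition follows because $P \setminus C' \subseteq P' \setminus C'$ (no candidate in $P'$ extends $C'$ by maximality there, and $u \in C'$). The second follows from the definition of $X'$: any $w \in X$ that extends $C'$ would have to be adjacent to $u \in C'$, hence $w \in X' $, contradicting maximality of $C'$ with respect to $X'$.

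The step I expect to be the most delicate is verifying that the restriction $X' = X \cap N(u)$ really does preserve maximality in both directions without any spurious loss; concretely, checking that vertices of $X \setminus N(u)$ are genuinely irrelevant once $u$ is committed to every clique under consideration. Everything else is a clean application of the BK invariant and the observation that $u$ behaves as a ``forced'' vertex in the subproblem, analogous in spirit to the pivot-selection arguments in Algorithm~\ref{alg:BKdegen} but applied in a stronger form because $u$ dominates all of $P$.
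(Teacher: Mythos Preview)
Your proposal is correct and follows essentially the same route as the paper's own proof: both establish that every maximal clique of the subproblem must contain $u$ because $u$ is adjacent to all of $P\setminus\{u\}$ (and to $R$ by the BK invariant), so any clique $R\cup S$ with $u\notin S$ can be enlarged by $u$. Your write-up is in fact more complete than the paper's, which stops after proving ``every maximal clique contains $u$'' and leaves the two inclusions and the correctness of the restriction $X'=X\cap N(u)$ implicit; your explicit check that vertices in $X\setminus N(u)$ become irrelevant once $u$ is committed is exactly the missing detail.
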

\begin{proof}
    For a dynamic degree-$(\lvert P\rvert-1)$ vertex $u$, we have $\lvert N_P(u)\rvert = \lvert P\rvert -1$, which means $\forall S\subseteq P$ and $u\notin S$, $u\in C(S)$. Assume there exists a subset $S\subseteq P$ that $u\notin S$ and $S\cup R$ is a clique. Then $S\cup R$ is not maximal due to $u \in P\cap C(S)$. That is any maximal clique in the subproblem $(R,P,X)$ must contain the vertex $u$.

\end{proof}

\begin{exmp} \label{exmp7}
Let us revisit the subproblem presented in Figure~\ref{fig3}(a). 
Based on the updated subgraph in Figure \ref{fig3}(c), we identify that vertices $u_1$, $u_2$, $u_3$, and $u_6$ are all dynamic degree-$(\lvert P\rvert-1)$ vertices. As a consequence, adding these vertices to $R$ will make the subgraph of this subproblem much smaller, with only two isolated vertices, as shown in Figure \ref{fig3}(d). Hence, the overall search tree of this subproblem derived from our reduction rule is shown in Figure \ref{fig4}(b).
It is noteworthy that the dynamic degree-zero and degree-one reductions effectively eliminate low-degree vertices from the original subproblem, thereby enhancing the efficiency of dynamic degree-$(\lvert P\rvert-1)$ reduction and further reducing the scale of the subproblem.
\end{exmp}

Algorithm \ref{alg:dynamicVertexReduction} outlines our dynamic reduction procedure in detail. It starts by marking each neighbor $u$ of each vertex $v$ in forbidden set $X$ if $u\in P$, where a marked vertex $u$ means $N(u)\cap X \neq \emptyset$ (lines 2-4). Subsequently, we iterate through each vertex $v \in P$. 
For dynamic degree-zero vertices, we remove them according to Lemma \ref{lemma4} (lines 6-7). For dynamic degree-one vertices, we selectively remove some of them following the relaxed dynamic degree-one reduction rule in Lemma \ref{lemma6} (lines 8-11). Afterward, we reiterate through reduced $P'$ again to perform the dynamic degree-$(\lvert P\rvert-1)$ reduction (lines 12-14). 
Once all reduction operations are completed, we update the forbidden set $X$ so that $R'\subseteq N(X')$ 
(line 15).

\noindent\textbf{Complexity Analysis}. 
Marking the vertices in $P$ takes the cost $O(\lambda\lvert X \rvert)$  (lines 2-4), where $\lambda$ is the degeneracy of the  graph $G$.  
Traversing the vertices in $P$ and their neighbors costs $O(\lambda\lvert P\rvert)$. Updating the set $X$ also requires $O(\lambda\lvert X \rvert)$ time. Thus, 
the overall time cost of Algorithm \ref{alg:dynamicVertexReduction} is $O(\lambda(\lvert X\rvert+\lvert P\rvert))$ in the worst case.
The space overhead is $O(\lvert P \rvert)$ as we need to mark the vertices in $P$.

\noindent \textbf{Discussion.} 
Let us consider the search trees in Figure~\ref{fig4}, there are 5 search nodes that need to conduct a pivot selection (excluding leaf nodes) in Figure~\ref{fig4}(a). The time overhead is $O(5(|P|+|X|)^2)$. By performing dynamic reduction techniques (explained in Example~\ref{exmp7}), we reduce the number of nodes that need to conduct a pivot selection from 5 to 1, as shown in Figure~\ref{fig4}(b). 
In other words, the speedups of this procedure achieve 5$\times$.
In practice, such situations may occur 
frequently 
during the search process, 
improving the time efficiency significantly.

\section{Maximality Check Reduction}\label{sec6}

\subsection{Intuition}

In order to enhance the time efficiency, 
considerable effort has been focused on reducing the candidate set $P$, while the cost of the maximality check has often been overlooked. 
In this section, we shed light on the observation that the forbidden set $X$ entails a significant amount of redundant computation. To address this issue, we introduce a technique namely, \textit{maximality check reduction}, that efficiently reduces 
the forbidden set $X$. 

\subsection{Forbidden Set Reduction}

The insight above leads to a method of reducing the forbidden set $X$ by checking the neighborhood dominance between vertices in $X$, as presented in the following Lemma.  

\begin{lem}{\textbf{(Forbidden Set Reduction by Neighbor Containment)}}
\textit{
For two vertices $u,v \in X$, if $N_P(u) \subseteq N_P(v)$, it holds that $\tilde{mc}(R,P,X) = \tilde{mc}(R,P,X')$, where $X' = X\setminus\{u\}$.
\label{lemma7}}
\end{lem}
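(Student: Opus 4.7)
I would prove the equality by showing mutual inclusion, leaning on the semantic characterization of $\tilde{mc}(R,P,X)$ that is implicit in the BK recursion: a candidate $K = R \cup S$ with $S \subseteq P$ belongs to $\tilde{mc}(R,P,X)$ iff $S$ induces a clique, no vertex of $P \setminus S$ is adjacent to every vertex of $S$, and no vertex of $X$ is adjacent to every vertex of $S$. Only the third condition depends on the forbidden set, so the whole argument reduces to tracking that condition as $X$ shrinks to $X' = X \setminus \{u\}$.

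The forward inclusion $\tilde{mc}(R,P,X) \subseteq \tilde{mc}(R,P,X')$ is essentially free: shrinking $X$ only weakens the maximality filter, so anything that survived the stronger filter survives the weaker one. I would dispose of this in a single sentence. For the reverse inclusion, take any $K = R \cup S \in \tilde{mc}(R,P,X')$ and suppose for contradiction that $K \notin \tilde{mc}(R,P,X)$. The only possible obstruction is that the removed vertex $u$ is adjacent to every vertex of $S$, i.e., $S \subseteq N_P(u)$. The hypothesis $N_P(u) \subseteq N_P(v)$ then forces $S \subseteq N_P(v)$. But $v \in X'$ (since $v \neq u$), so $v$ being adjacent to all of $S$ already violates $K \in \tilde{mc}(R,P,X')$, a contradiction. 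Hence no such $K$ exists and the reverse inclusion holds.

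The main obstacle I anticipate is not any deep difficulty but merely being precise about the two conventions that make the proof trivial once fixed: first, the exact definition of ``maximal in the subproblem'' in terms of the three bullet points above; and second, the implicit assumption $u \neq v$ in the lemma's statement (otherwise the containment $N_P(u) \subseteq N_P(v)$ is vacuous and the lemma says nothing). One additional edge case worth mentioning is $S = \emptyset$, but this causes no trouble: the vacuous $S \subseteq N_P(v)$ still applies, and $v \in X'$ already blocks $K = R$ from lying in $\tilde{mc}(R,P,X')$, so the contradiction goes through uniformly.
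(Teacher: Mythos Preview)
Your proof is correct and follows essentially the same contradiction argument as the paper: assume some $R\cup S$ is reported under $X'$ but not under $X$, deduce $S\subseteq N_P(u)\subseteq N_P(v)$, and then observe that $v\in X'$ already blocks it. The paper's proof is terser (it only argues the nontrivial direction and omits the edge cases you mention), but the substance is identical.
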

\begin{proof}
    In the subproblem $(R,P,X)$, assume a non-maximal clique $R\cup\Delta R$ becomes maximal due to the removal of $u$ where $\Delta R \subseteq P$, which means $\Delta R \subseteq N_P(u)$. The maximality $X\cap C(\Delta R)=\emptyset$ implies that $\Delta R \supseteq N_P(v)$, which contradicts the condition $N_P(u) \subseteq N_P(v)$. 
    Thus, the Lemma \ref{lemma7} holds.
\end{proof}

\begin{exmp}
    \textit{
    Consider the graph shown in Figure \ref{fig3}(a). We observe that $N_P(v_1) = \{u_1, u_7\}$, $N_P(v_3) = \{u_1,u_2,u_3\}$, and $N_P(v_4) = \{u_9\}$ are subsets of $N_P(v_2) = \{u_1,u_2,u_3, u_7,u_9\}$. If we remove vertices $v_1$, $v_3$, and $v_4$ from the forbidden set $X$, the solution of the current subproblem will remain unaffected, that is, any cliques that are not maximal will not be reported and any maximal cliques will not be lost. For example, $R\cup\{u_9\}$ is not a maximal clique due to the adjacent relationship with vertices $v_4$ and $v_2$. After removing $v_4$ from the forbidden set, the clique $R\cup\{u_9\}$ will not be maximal due to the existence of vertex $v_2$.
    }
\end{exmp}

Establishing the neighbor containment relationship between vertices in $X$ by traversing them and their neighbors in each subtask incurs significant time overhead. 
Therefore, we design an efficient approach that continuously establishes the neighbor containment relationship between vertices during the MCE process, allowing us to prune the $X$ set without incurring additional time overhead. The overall process is outlined in Algorithm \ref{alg:forbiddenSetReduction}.

\begin{algorithm}[t]
    \SetNoFillComment
    \caption{forbiddenSetReduction($v, P, X, ignoreId$)}
    \label{alg:forbiddenSetReduction}
    \KwIn {Vertex $v$ inducing the subproblem 
    with partial clique $\{v\}$, candidate set $P$, and forbidden set $X$;
    an array $ignoreId$ 
    }
    \KwOut {Reduced $X'$}
    \nl $X'\leftarrow X$ \\
    \tcc{reduce $X$ for current subgraph}
    \nl $i \gets$ the order of vertex $v$ in ascending order \\
    \nl \For{$u \in X$}{
        \nl \If{$ignoreId[u] < i$}{
            \nl $X' \leftarrow X'\setminus\{u\}$ \\
        }
    }
    \tcc{update $ignoreId$ 
    }
    \nl \For{$u \in P$}{
        \nl \If{$P \subseteq N^+(u)$}{
            \nl $j \gets$ the order of $u$ \\
            \nl $ignoreId[v] \gets \min(j,ignoreId[v])$ \\
        }
        \nl \ElseIf{$N^+(u)\subseteq P$}{
            \nl $ignoreId[u] \gets \min(i,ignoreId[u])$ \\
        }
    }
  \nl \KwRet $X'$ \\
\end{algorithm}

A subproblem induced by vertex $v$ refers to the subproblem with partial clique $R=\{v\}$, candidate set $P=N^+(v)$, and forbidden set $X=N^-(v)$. Algorithm \ref{alg:forbiddenSetReduction} aims to establish a neighbor containment relationship between $v$ and vertices in the candidate set $P$ of the induced subproblem by vertex $v$, and efficiently prunes the $X$ set. It utilizes an array of size $n$, called $ignoreId$, to record the order in which vertices can be excluded from constructing the set $X$. Initially, all elements are set to $n$, indicating that no vertex will be ignored.
For a subgraph induced by the later neighbors of a vertex $v$, if a candidate vertex $u$ satisfies $P \subseteq N^+(u)$, i.e., $N^+(v) \subseteq N^+(u)$, then $v$ can be ignored in all subsequent subproblems after completing the $j$-th iteration, where $j$ is the order of $u$. This is because in any subproblem after that, if $v \in X$, then $u$ must be contained in $X$, which allows us to prune $v$ following Lemma~\ref{lemma7}. Thus, the $ignoreId$ of vertex $v$ will be updated as the minimum of $j$ and its current value (lines 7-9). Conversely, if $N^+(u) \subseteq P$, it means $u$ can be ignored instantly after this iteration since $N_P(u)$ will be dominated by $N_P(v)$ in all subsequent subproblems (lines 10-11). 

\noindent \textbf{Complexity Analysis}.  Reducing the size of the forbidden set $X$ runs in $O(\lvert X\rvert)$, which is linear in the size of $X$. The process of updating  $ignoreId$ for each vertex in $P$ involves traversing their later neighbors. Thus, the time complexity of Algorithm \ref{alg:forbiddenSetReduction} is $O(\lambda\lvert P\rvert)$.
The space complexity 
is $O(n)$ due to the array \textit{ignoreId} of size $n$.

\section{Experiments} \label{sec7}

\subsection{Experimental Setting}

\begin{figure*}[tbp]\vspace{-0.1in}
\centerline{\includegraphics[width=1\textwidth]{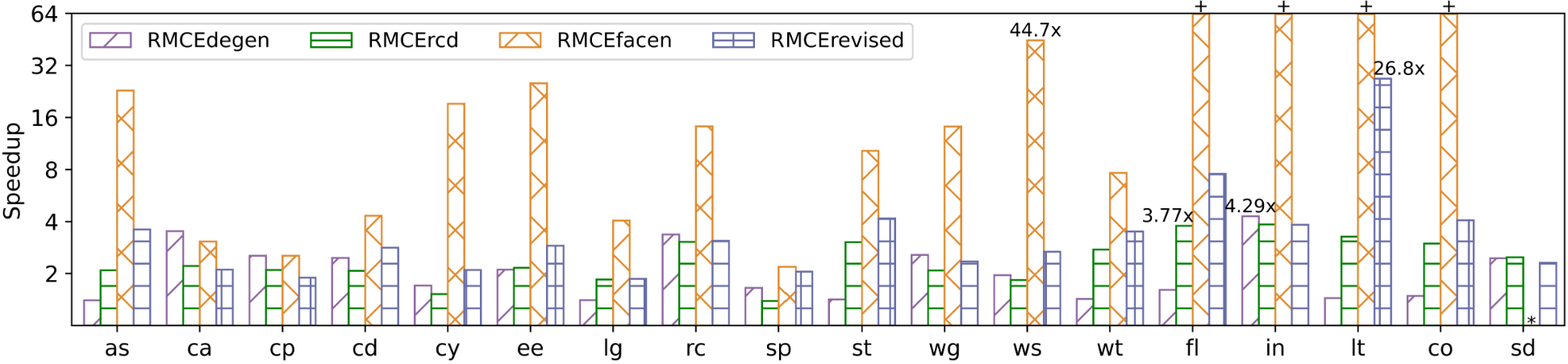}}
\vspace{-0.1in}
\caption{Overall performance in enumerating maximal cliques:  each bar represents the speedups of its corresponding algorithm enhanced by our reduction methods over the original algorithm (e.g., RMCEdegen represents $\frac{\text{running time of BKdegen}}{\text{running time of RMCEdegen}}$).
The ``+'' symbol denotes that our method completes within 12 hours while the original one fails. 
The ``*'' symbol denotes that both our method and the original algorithm failed to complete within 12 hours.}
\label{speedup}
\end{figure*}

\begin{table}[!t]
\renewcommand{\arraystretch}{1.25}
\caption{Graph Statistics, where $\lambda$ represents the degeneracy.}
\label{table1}
\vspace{-0.1in}
\centering
\resizebox{\linewidth}{!}{
\begin{tabular}{ccrrrr}
   \toprule
   Graph & Abbr. & \#Vertices &  \#Edges & $d_{max}$ & $\lambda$ \\
   \midrule
as-skitter       & as & 1696415    & 11095298  & 35455  & 111        \\ 
ca-CondMat       & ca & 23133      & 93439     & 279    & 25         \\ 
cit-Patents      & cp & 3774768    & 16518947  & 793    & 64         \\ 
com-dblp         & cd & 317080     & 1049866   & 343    & 113        \\ 
com-orkut        & co & 3072441    & 117185083 & 33313  & 253        \\ 
com-youtube      & cy & 1134890    & 2987624   & 28754  & 51         \\ 
email-EuAll      & ee & 265009     & 364481    & 7636   & 37         \\ 
flickr           & fl & 105938     & 2316948   & 5425   & 573        \\ 
inf-road-usa     & in & 23947346   & 28854311  & 9      & 3          \\ 
large\_twitch    & lt & 168114     & 6797557   & 35279  & 149        \\ 
loc-gowalla      & lg & 196591     & 950327    & 14730  & 51         \\ 
roadNet-CA       & rc & 1965206    & 2766607   & 12     & 3          \\ 
sc-delaunay\_n23 & sd & 8388608    & 25165784  & 28     & 4          \\ 
soc-pokec        & sp & 1632803    & 22301964  & 14854  & 47         \\ 
soc-twitter-higgs & st & 456631     & 12508440  & 51386  & 125        \\ 
web-Google       & wg & 875713     & 4322051   & 6332   & 44         \\ 
web-Stanford     & ws & 281903     & 1992636   & 38625  & 71         \\ 
wiki-Talk        & wt & 2394385    & 4659565   & 100029 & 131        \\ 
   \bottomrule
\end{tabular}
}
\end{table}

\noindent\textbf{Dataset}. As listed in Table~\ref{table1}, we use 18 real networks from SNAP~\cite{snapnets} and Network Repository \cite{rossi2015network} in the experiments. 

\noindent\textbf{Algorithms}. 
We evaluate the performance of enhancing four existing methods BKdegen, BKrcd, BKfacen, and BKrevised.
\begin{itemize}[leftmargin=0.35cm, itemindent=0cm]
    \item {BKdegen} \cite{eppstein2010listing}: The degeneracy-based algorithm.
    \item {BKrcd} \cite{li2019fast}: The top-down algorithm for MCE.
    \item {BKfacen} \cite{jin2022fast}: MCE algorithm that uses hybrid data structure with adjacency list and partial adjacency matrix.
    \item {BKrevised} \cite{naude2016refined}: MCE algorithm with a revised pivot selection strategy.
    \item {RMCEdegen}: 
     Our method uses the recursion of BKdegen.
    \item {RMCErcd}: 
     Our method uses the recursion of BKrcd.
     \item {RMCEfacen}: 
     Our method uses the recursion of BKfacen.
     \item  {RMCErevised}:
     Our method uses the recursion of BKrevised.
\end{itemize}

All experiments are conducted on a Linux Server equipped with
Intel(R) Xeon(R) CPU E5-2696 v4 @ 2.20GHz and 128G RAM. All
algorithms are implemented in C++ and compiled with -$O3$ option. The source code of RMCE is available at \url{https://github.com/DengWen0425/RMCE}.

\subsection{Overall Result}
Taking the running time cost of each original MCE algorithm as the baseline, we compute the speedups of our methods over BKdegen, BKrcd, BKfacen, and BKrevised, respectively. Figure \ref{speedup} presents the results for 18 graphs. 
we can observe that RMCEdegen and RMCErcd consistently outperform both BKdegen and BKrcd. Specifically, RMCEdegen achieves a maximum speedup of 4.29$\times$ in the inf-road-usa dataset,  RMCErcd achieves a maximum speedup of 3.77$\times$  in the flickr dataset,  RMCEfacen achieves a maximum speedup of 44.7$\times$ in the web-standford dataset, and RMCErevised achieves a maximum speedup of 26.8$\times$ in the large\_twitch dataset. 
The experimental results confirm that our proposed reduction methods demonstrate considerable performance in accelerating the MCE procedure.

\subsection{Detailed Evaluation}
\noindent\textit{{\underline{The Effect of Global Reduction}}}. To evaluate the effect of our global reduction, we show the ratio of deleted vertices and the ratio of deleted edges compared to the original graph in Figure \ref{reductionratio}.
We can find that over 35\% vertices in 12 graphs (e.g., as-Skitter, cit-Patents, and com-youtube) can be removed by using the global reduction. Beyond that, over 20\% edges in 9 graphs (e.g., email-EuAll, loc-gowalla, and wiki-Talk) can be removed. In particular, all vertices and edges in graphs inf-road-usa  and roadNet-CA have been deleted due to their sparsity (thus, we can exclude them from subsequent experiments). It is worth noting that no vertices and edges in the sc-delaunay\_n23 dataset have been deleted, which also verifies the effectiveness of our other reduction methods, i.e., dynamic reduction and maximality check
reduction, on the other hand.

\noindent\textit{{\underline{Number of Recursive Calls}}}. 
In this experiment, we investigate the number of recursive calls during the MCE algorithm to demonstrate the pruning power of our reduction method. We evaluate the reduction ability of three algorithms by comparing their recursive calls to that of the baseline BKdegen. 
Formally, we define the ratio of recursive calls as 
$$\frac{\#\text{recursive calls of 
a method (like RMCEdegen) 
} }{\#\text{recursive calls of original algorithm (like BKdegen)}}$$
As depicted in Figure \ref{recursiveCount}, RMCEdegen can reduce the number of recursive calls to no more than 17.6\%, RMCErcd can reduce the number of recursive calls to no more than 28.5\%, the number of recursive calls of RMCEfacen are all below 4.5\% of original, and for RMCErevised, the ratios of the number of recursive calls are no more than 20.5\%. 
In specific instances, such as inf-road-usa and roadNet-CA, 
no recursive calls are required since all the vertices and edges have been removed by utilizing the global reduction. 
The ratio of the number of recursive calls serves as an indicator of the effectiveness of a method, as each recursive call typically involves set intersection and pivot selection. 
The superior pruning ability of our algorithms 
is evident from these results.

\begin{figure}[tbp]
\centerline{\includegraphics[width=0.5\textwidth]{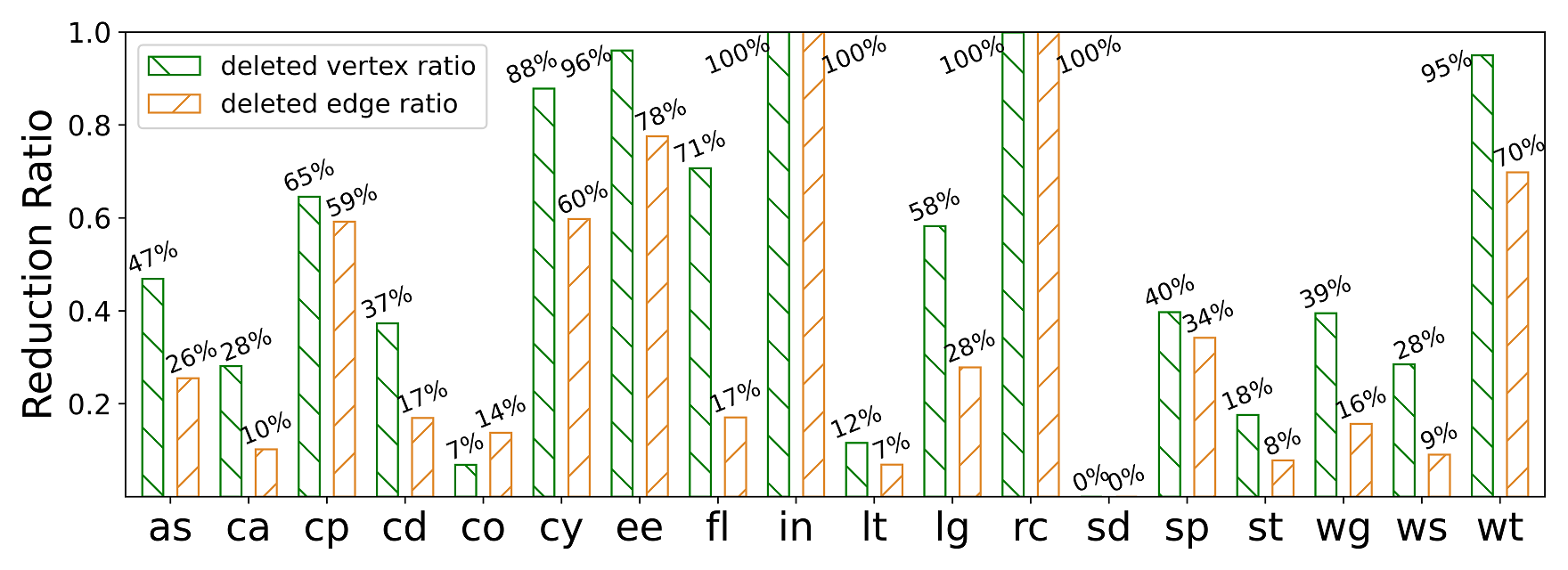}}
\vspace{-0.1in}
\caption{Reduction ratio of global reduction}
\label{reductionratio}
\vspace{-0.1in}
\end{figure}

\begin{figure}[tbp]
\centerline{\includegraphics[width=0.5\textwidth]{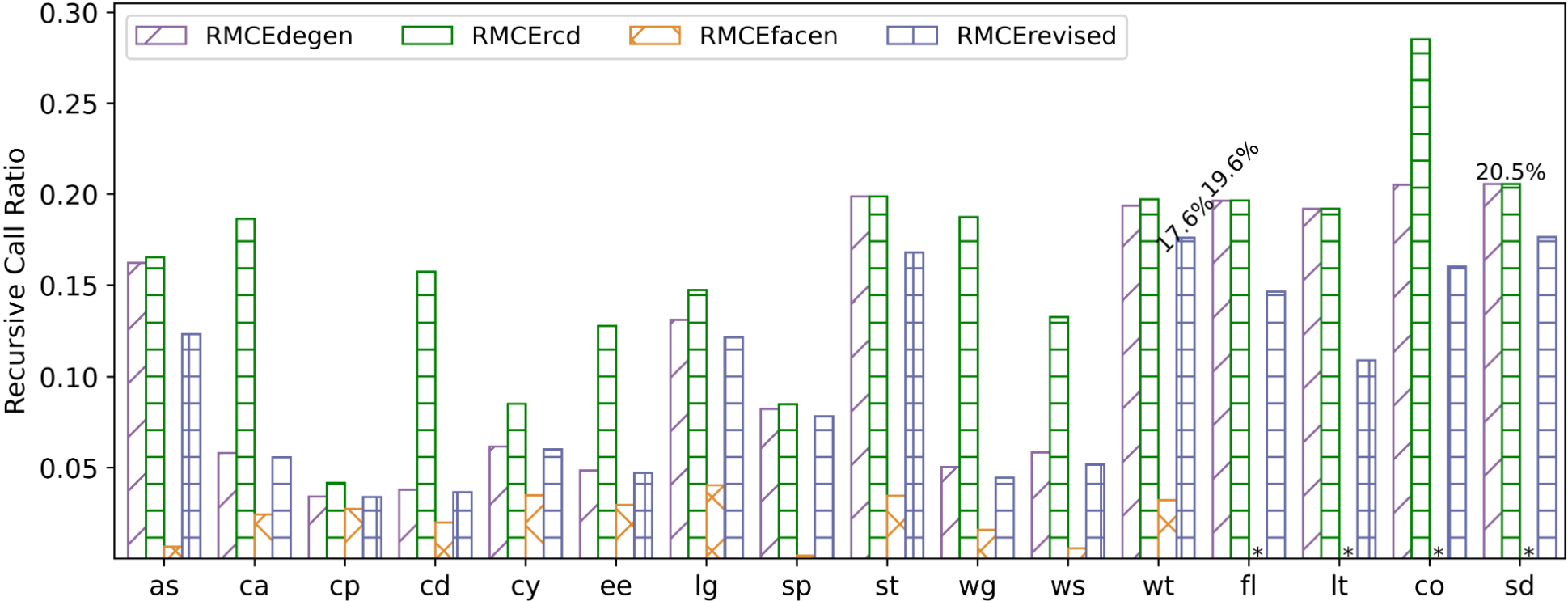}}
\vspace{-0.125in}
\caption{Ratio of recursive calls. The symbol ``*'' denotes that the original method fails to complete. 
}
\label{recursiveCount}
\vspace{-0.075in}
\end{figure}

\noindent\textit{{\underline{Effect of Forbidden Set Reduction}}}. To evaluate the impact of our forbidden set reduction, we examine two key metrics: the ratio of pruned vertices when constructing the forbidden set $X$ for each subproblem ($r_{vertex} = \frac{\sum\lvert X'\rvert}{\sum\lvert X\rvert}$) and the ratio of subproblems where forbidden set reduction occurs  ($r_{subproblem}=\frac{\# \{subproblem\mid X'\subset X\}}{\# subproblem}$) in the outer iteration. The results are depicted in Figure \ref{xprune}. 

Remarkably, we observe that in datasets such as ca-CondMat, com-dblp, web-Google, and web-Stanford, the ratio of pruned vertices (i.e., $r_{vertex}$) can reach close to 50\%. Additionally, in datasets like ca-CondMat, com-dblp, flickr, and sc-delaunay\_n23, the ratio of reduced subproblems ($r_{subproblem}$) achieves nearly 40\%, indicating the significant pruning effect of our maximality check reduction technique on the size of the forbidden set $X$. These results highlight the effectiveness of our method in reducing the computational overhead of the maximality check process.

\noindent\textit{{\underline{Reduction of Vertex Visits}}}. We also report the distribution of vertex visits with respect to their degrees. Due to space limitations, we report the results of 4 graphs including web-Google, cit-Patents, soc-pokec, and com-dblp.  
As illustrated in Figure \ref{neweval}, our method RMCEdegen yields a substantial reduction in the number of vertex visits across different degrees. 
In web-Google ((Figure \ref{neweval}(a)), RMCEdegen reduces 88\% vertex visits compared to BKdegen (70\% compared to BKrcd) at degree 20. In cit-Patents (Figure \ref{neweval}(b)), RMCEdegen reduces 82\% vertex visits compared to BKdegen (82\% compared to BKrcd) at degree 15. And in com-dblp (Figure \ref{neweval}(d)), RMCEdegen reduces 73\% vertex visits compared to BKdegen (61\% compared to BKrcd) at degree 3. These results confirm the effectiveness of global reduction.

\begin{figure}[tbp]
\centerline{\includegraphics[width=0.5\textwidth]{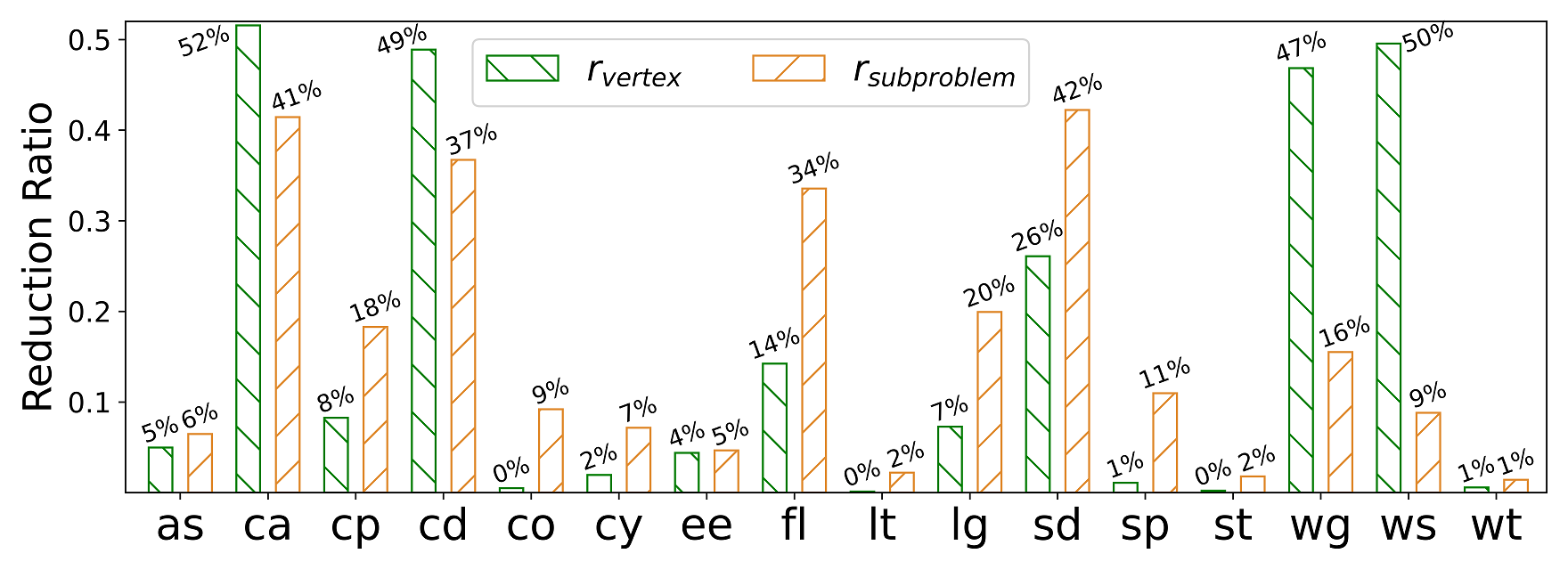}}
\vspace{-0.1in}
\caption{Reduction ratio of maximality check reduction}
\label{xprune}
\vspace{-0.1in}
\end{figure}

While for graphs that contain many vertices with a higher degree such as soc-pokec in Figure \ref{neweval}(c), our method not only decreases the visits of low-degree vertices but also reduces the visits of high-degree vertices. For example, RMCEdegen reduces 46\% vertex visits compared to BKdegen (54\% compared to BKrcd) at degree 100 in soc-pokec. Similar results can be seen 
in cit-Patents as presented in Figure \ref{neweval}(b).
These results verify the effectiveness of our dynamic reduction and maximality check reduction.

\begin{table}[!t]\vspace{-0.1in}
\renewcommand{\arraystretch}{1.25}
\caption{Ablation Study}
\vspace{-0.1in}
\label{table3}
\resizebox{\linewidth}{!}{
\centering
\begin{tabular}{crrrr}
\toprule
Graph &  RMCEdegen &  Variant1 &  Variant2 &  Variant3\\
\midrule
as-skitter        & 57.49    & \textbf{51.22}     & 70.52     & 60.77   \\ 
ca-CondMat        & \textbf{0.05}   & 0.05    & 0.06    & 0.11  \\ 
cit-Patents       & \textbf{22.14}     & 25.71     & 25.85     & 24.86   \\ 
com-dblp          & \textbf{0.67}   & 0.75     & 0.90    & 0.90  \\ 
com-orkut         & \textbf{2393.59}    & 2475.37     & 2867.58     & 2451.96   \\ 
com-youtube       & 4.01    & \textbf{3.74}     & 4.47     & 4.19   \\ 
email-EuAll       & 0.47    & \textbf{0.39}     & 0.48    & 0.44  \\ 
flickr            & \textbf{178.86}    & 184.36     & 249.78     & 185.40   \\ 
inf-road-usa      & \textbf{11.51}    & 19.07     & 11.82     & 11.62   \\ 
large\_twitch     & \textbf{325.24}    & 341.99     & 408.66      & 344.67    \\ 
loc-gowalla       & 1.91    & \textbf{1.74}     & 2.38      & 2.06   \\ 
roadNet-CA        & \textbf{0.95}   & 1.41     & 0.97    & 0.96   \\ 
sc-delaunay\_n23  & 11.52    & \textbf{9.28}      & 13.53     & 12.04   \\ 
soc-pokec         & 44.77    & \textbf{43.69}     & 49.62     & 48.93   \\ 
soc-twitter-higgs & \textbf{391.48}    & 405.62     & 478.73     & 415.12   \\ 
web-Google        & \textbf{2.55}    & 2.57     & 3.00     & 2.69   \\ 
web-Stanford      & \textbf{1.51}    & 1.52     & 2.08     & 1.53   \\ 
wiki-Talk         & 76.68    & \textbf{75.63}     & 90.74     & 80.63   \\ 
\bottomrule
\end{tabular}
\vspace{-0.1in}
}
\end{table}

\begin{figure*}[ttbp]
\vspace{-0.15in}
\centering
\subfloat[web-Google]{\includegraphics[width=0.47\textwidth]{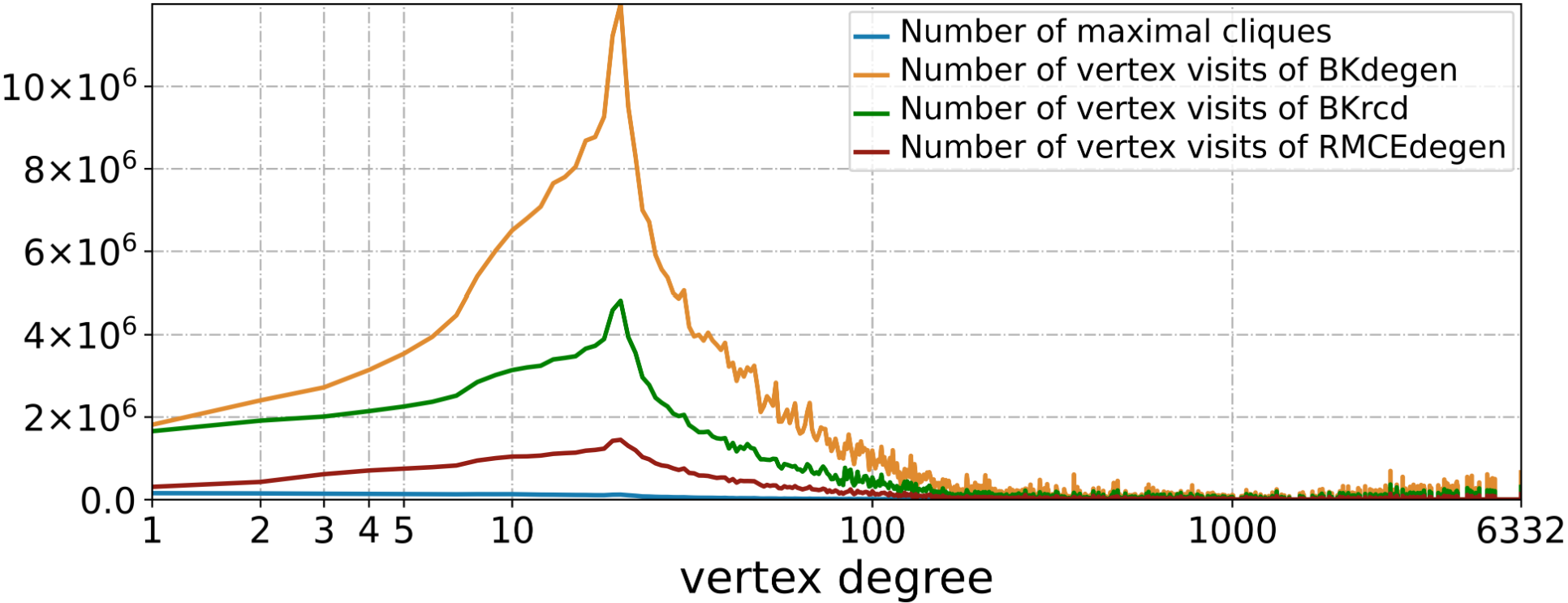}
\label{eval2.1}}
\hfil
\subfloat[cit-Patents]{\includegraphics[width=0.47\textwidth]{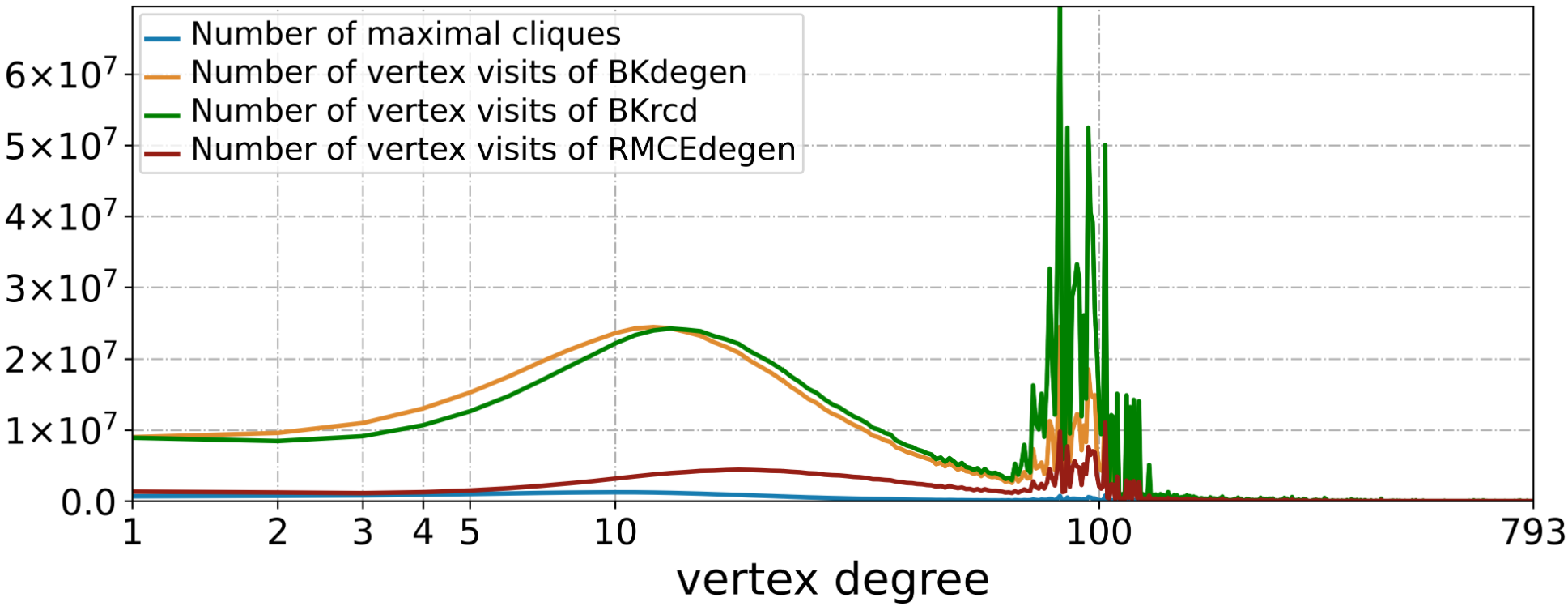}
\label{eval2.2}} \\
\vspace{-0.1in}
\subfloat[soc-pokec]{\includegraphics[width=0.47\textwidth]{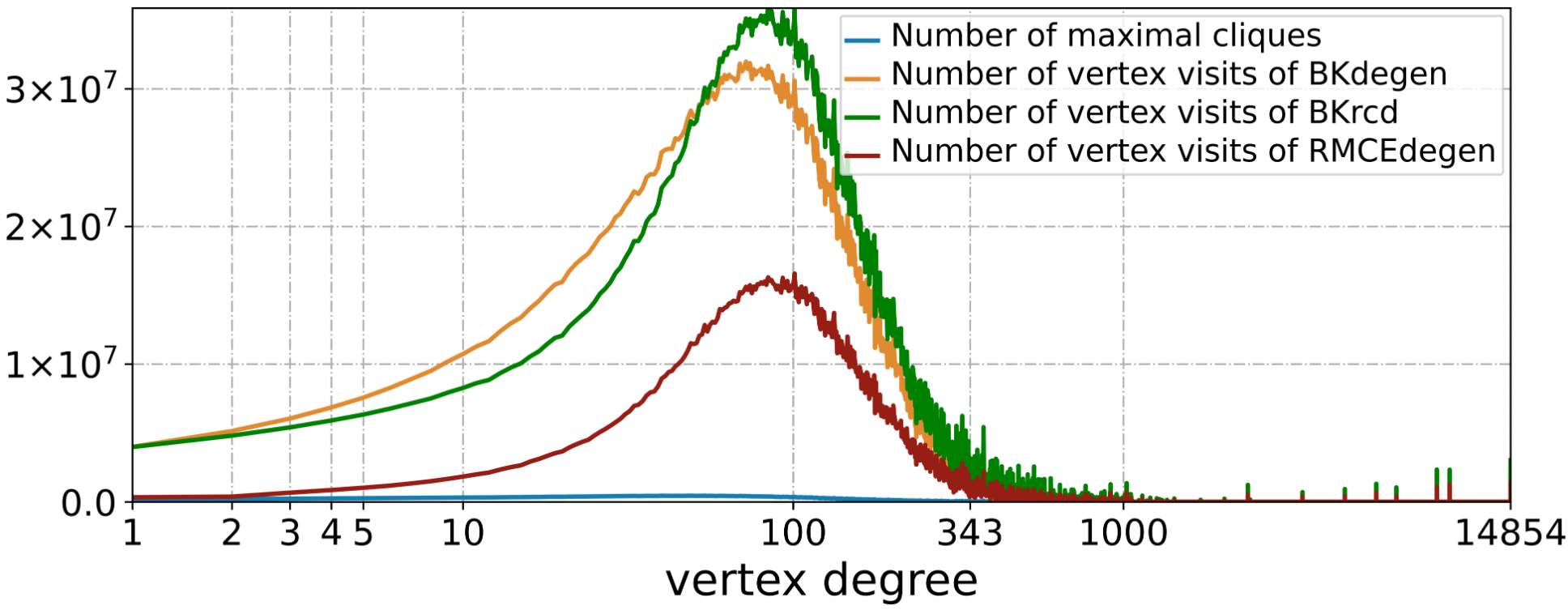}
\label{eval2.3}}
\hfil
\subfloat[com-dblp]{\includegraphics[width=0.47\textwidth]{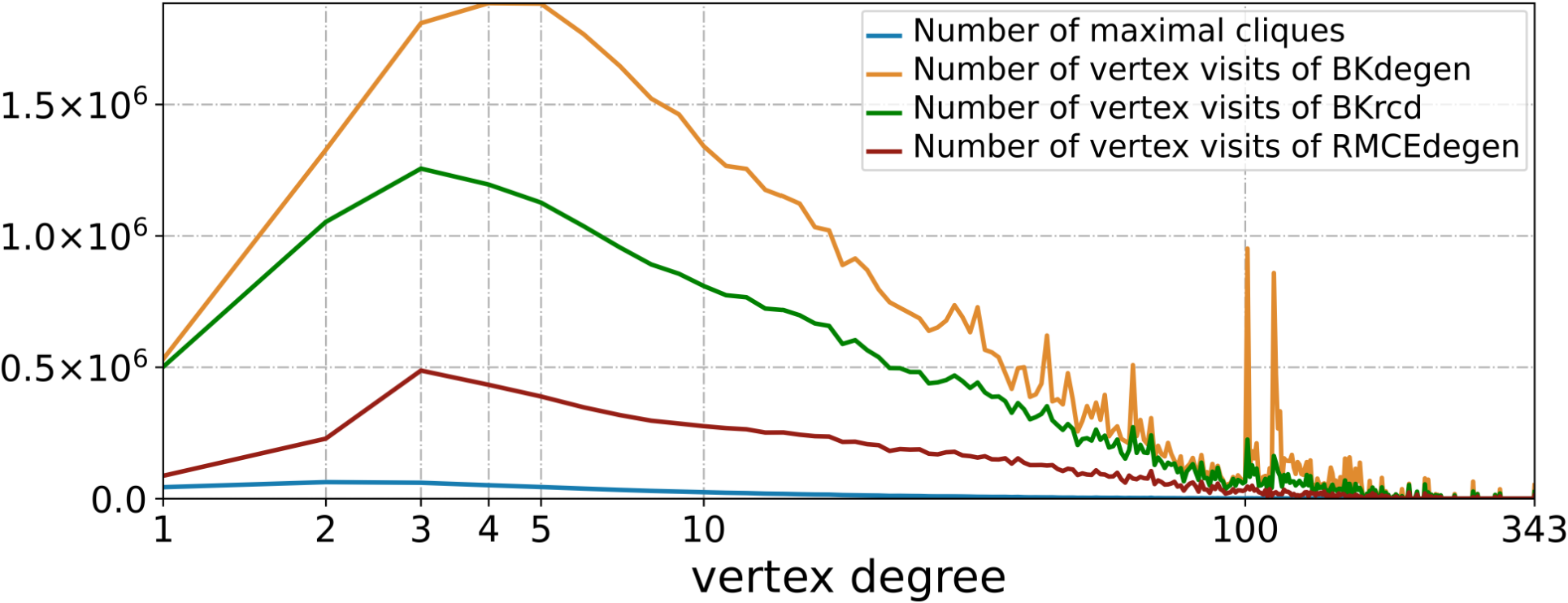}
\label{eval2.4}}
\vspace{-0.1in}
\caption{Illustration of the 
gaps between the number of maximal cliques and the number of vertex visits by different methods (the horizontal axis is log-scaled).}
\label{neweval}
\vspace{-0.1in}
\end{figure*}

\subsection{Ablation Study}

To evaluate the individual effectiveness of the proposed three reduction methods, we implement three variant algorithms, named \textit{Variant1}, \textit{Variant2}, and \textit{Variant3}, each of which disables global reduction, dynamic reduction, and maximality check reduction, respectively. Table \ref{table3} presents the running time of these variants. 

We can observe that the complete version RMCEdegen outperforms all other variants in 11 datasets. However, Variant1 runs faster in 7 datasets. This may be attributed to the overhead of global edge reduction. Nevertheless, the running time gap between Variant1 and RMCEdegen remains relatively small, confirming the overall effectiveness of global reduction.
Removing the dynamic reduction technique significantly degrades the performance in most datasets (e.g., as-skitter and flickr) since it effectively prunes a substantial number of search branches during recursion. The last column of the table demonstrates that maximality check reduction consistently improves the time efficiency across all datasets. As this technique just 
takes linear space overhead 
without incurring additional computations, it provides consistent speed-up benefits.
In summary, these results validate the efficiency and effectiveness of our reduction methods in optimizing the performance of the MCE algorithm. 

\section{Related Work} \label{sec8}

\noindent\textbf{Maximal Clique Enumeration}. The classic Bron-Kerbosch (B-K) algorithm\cite{bk1973} is a recursive backtracking algorithm that solves MCE by maintaining three sets of vertices, i.e., $R,P$, and $X$. 
They also first propose the naive pivot technique, that is to choose the first vertex as the pivot. Tomita et al. \cite{tomita2006worst} prove that the worst-case time complexity is $O(3^{\frac{n}{3}})$ by choosing the vertex $u$ which maximizes $\lvert N(u) \cap P \rvert$ from $P \cup X$ as a pivot. Eppstein et al. \cite{eppstein2010listing} further improve the B-K algorithm with degeneracy vertex order in which each vertex has at most $\lambda$ neighbors following it. This order ensures that the size of each sub-problem is at most $d$ and also reduces the worst-case time complexity to $O(3^{\frac{\lambda}{3}})$. Li et al.~\cite{li2019fast} propose a top-to-down approach, that repeatedly chooses and removes the vertex with the smallest degree until a clique is reached, which aims to efficiently solve the MCE in these dense degeneracy neighborhoods. Other studies have explored solving MCE using external memory~\cite{cheng2011finding} or using GPUs to accelerate the B-K algorithm~\cite{wei2021accelerating}. Additionally, the output-sensitive algorithm~\cite{tsukiyama1977new,makino2004new,chang2013fast,conte2016sublinear} is a branching algorithm that guarantees the time interval between two consecutive outputs (also known as delay) at the polynomial level. The enumeration time of this algorithm is related to the number of maximum cliques of the final output, making it an output-sensitive algorithm.

\noindent\textbf{Parallel Approach}. Numerous parallel algorithms have been designed for MCE \cite{du2006parallel,schmidt2009scalable,lessley2017maximal,san2018efficiently, das2018shared}. Du et al. \cite{du2006parallel} implement a method that regards each vertex and its neighborhood $N(v)$ as a basic task, with each processor responsible for multiple basic tasks according to a simple serial number mapping. Schmidt et al. \cite{schmidt2009scalable} improve load balancing of parallel algorithms through a work stealing strategy, where an idle thread randomly polls one or more search tree nodes at the bottom of the stack of other threads. Lessley et al. \cite{lessley2017maximal} introduce an approach consisting of data-parallel operations on shared-memory, multi-core architectures, aiming to achieve efficient and portable performance across different architectures. Das et al. \cite{das2018shared} also present a shared-memory parallel method that parallelizes both pivot selection and sub-problem expansion.

\noindent\textbf{MCE on Special Graphs}. 
Many works focus on solving the MCE problem in other variant graphs, such as uncertain graphs\cite{mukherjee2015mining,li2019improved,dai2022fast}, dynamic graphs\cite{sun2017mining,das2019incremental}, temporal graphs\cite{himmel2016enumerating,molter2021isolation}, heterogeneous graphs\cite{hu2019discovering}, and attributed graphs\cite{pan2022fairness,zhang2023fairness}, among others.
Beyond that, the maximum clique problem is closely related to MCE. Lijun Chang \cite{chang2019eff} introduces several efficient reduction rules to tackle the maximum clique problem. Another related task is the maximum independent set, where reduction rules have also been applied \cite{fomin2009measure,dahlum2016accelerating,chang2017computing}.

\vspace{-0.1in}
\section{conclusion} \label{sec9}

In this paper, we introduce a novel reduction-based framework RMCE for enumerating maximal cliques. To reduce the computation cost, RMCE incorporates powerful graph reductions including global reduction, dynamic reduction, and maximality check reduction. We conduct comprehensive experiments over 18 real graphs. The empirical results confirm 
the 
effectiveness of our proposed reduction techniques.



\balance

\bibliographystyle{ACM-Reference-Format}
\bibliography{sample}

\end{document}